\numberwithin{equation}{section}
\numberwithin{figure}{section}
\theoremstyle{plain}
\newtheorem{lemma}{Lemma}[section]
\numberwithin{equation}{section}
\renewcommand{\v}[1]{\boldsymbol{#1}}
\newcommand{\xobs}{\mathbf{x}^{(\mathrm{o})}}
\newcommand{\xmis}{\mathbf{x}^{(\mathrm{m})}}
\newcommand{\xobsi}{\mathbf{x}^{(\mathrm{o})}_i}
\newcommand{\xmisi}{\mathbf{x}^{(\mathrm{m})}_i}
\newcommand{\xobshat}{\hat{\mathbf{x}}^{(\mathrm{o})}}
\newcommand{\xmishat}{\hat{\mathbf{x}}^{(\mathrm{m})}}
\newcommand{\xobstest}{\tilde{\mathbf{x}}^{(\mathrm{o})}}
\newcommand{\xmistest}{\tilde{\mathbf{x}}^{(\mathrm{m})}}
\begin{document}
\title[Bayesian Prediction with Covariates Subject to Detection Limits]{Bayesian Prediction with Covariates Subject to Detection Limits}
\author{Caroline Svahn and Mattias Villani}
\thanks{
Svahn: \textit{Department of Computer and Information Science, Linkoping University} and \textit{Ericsson AB}. 
Villani: \textit{Dept of Statistics, Stockholm University} and 
\textit{Department of Computer and Information Science, Linkoping University}.
\textit{E-mail: mattias.villani@stat.su.se}. \\
\textit{This work was partially supported by the Wallenberg AI, Autonomous Systems and
Software Program (WASP) funded by the Knut and Alice Wallenberg Foundation.}
}

\begin{abstract}
Missing values in covariates due to censoring by signal interference or lack of sensitivity in the measuring devices are common in industrial problems. We propose a full Bayesian solution to the prediction problem with an efficient Markov Chain Monte Carlo (MCMC) algorithm that updates all the censored covariate values jointly in a random scan Gibbs sampler. We show that the joint updating of missing covariate values can be at least two orders of magnitude more efficient than univariate updating. This increased efficiency is shown to be crucial for quickly learning the missing covariate values and their uncertainty in a real-time decision making context, in particular when there is substantial correlation in the posterior for the missing values. The approach is evaluated on simulated data and on data from the telecom sector. Our results show that the proposed Bayesian imputation gives substantially more accurate predictions than na\"{\i}ve imputation, and that the use of auxiliary variables in the imputation gives additional predictive power.\\
\textsc{Keywords:} truncated multivariate normal; censored data; online prediction.
\end{abstract}
\maketitle


\section{Introduction}\label{sec:intro}

Covariates subject to detection limits are common in industrial settings. The values may for instance be censored due to lack of sensitivity in measuring equipment such as in biomedical data \citep{hughes99,paxton97, lyles00} or signal processing scenarios such as localization \citep{dovis2015}.

The strategies for handling missing values in regression models are mainly focused on handling missingness in the response variable, whereas much less work have been done on missing covariate values. Moreover, much of the existing literature on covariates subject to detection limits is limited to situations with a small number of covariates, and the focus is typically parameter inference rather than prediction and decision making. 

\cite{lee2018} handle data where potentially all covariates are subject to detection limits using a generalized linear model estimated with maximum likelihood. A similar approach is developed  in \cite{deLima2018} for a Gaussian linear regression model.
Multiple imputation is proposed as a fast alternative by \cite{Bernhardt2015}, \cite{lee2012} and \cite{Arunajadai2012}. The approach presented by \cite{lee2012} can handle highly censored and correlated data, however, the authors do not present a complete strategy for data where all covariates are subject to detection limits.

While frequentist approaches generally have the advantage of being relatively fast, Bayesian methods can quantify the uncertainty for both parameters and predictions in a way that is directly usable for decision making under uncertainty. This is clearly crucial in safety critical scenarios where faulty decisions have severe consequences, but also in less dramatic but often occurring decisions, such as in wireless telecommunications where a faulty decision may disconnect users from the network \citep{ryden18}. In such scenarios, providing measures of uncertainty will aid the system in the decision making and provide a more reliable network. \cite{wu2012} suggest a Bayesian generalized linear model for lower, upper or interval censored data, and \citet{yue2016bayesian} propose a Bayesian generalized mixed model. The framework in \citet{yue2016bayesian} makes interesting use of auxiliary variables to aid in the Bayesian imputation when several covariates are subject to detection limits; we will adopt the same auxiliary model in our work here. 

The existing Bayesian literature use Gibbs sampling algorithms to simulate from the joint posterior of the model parameters and the missing covariate values. The proposed Gibbs samplers update the missing covariate values in an observation conditional on all other missing values, see e.g. \citet{yue2016bayesian} and \citet{wu2012}. We show that this can be very inefficient when the missing values are highly correlated in the posterior. Efficient sampling of missing values is particularly important in the prediction phase where the missing covariates for a new observation must be learned quickly in a real-time context. We therefore develop a fast and efficient Markov Chain Monte Carlo (MCMC) algorithm that samples all missing covariates jointly. The joint sampling is performed using the recently proposed and highly efficient truncated multivariate normal simulation algorithm in \citet{botev2017normal} and we additionally propose a random scan implementation \citep{amit1991comparing} to further increase the speed of the missing covariate updating step. The algorithm is presented for the regression case, but can equally well be used for classification via the data augmentation device for the Probit model in \citet{albert1993bayesian} or in logistic regression via the P{\'o}lya-Gamma augmentation in \citet{polson2013bayesian}. 

In contrast to \cite{wu2012} and \citet{yue2016bayesian}, our focus is on probabilistic prediction and decision making. We highlight how missing covariate values in the test data makes it necessary to re-run the MCMC updating steps also for the missing values in the training data. This is computationally demanding and we therefore propose and assess a batch mode strategy to circumvent this to speed up predictions in a real-time setting.

The paper is structured as follows. Section \ref{sec:model} presents the regression model where covariates are missing due to censoring at a detection limit and the prior distribution for the model parameters. Section \ref{sec:inference} presents the computational Bayesian inference algorithm and importance of joint updating of missing values. Section \ref{sec:prediction} develops the Bayesian predictive framework and discusses computational considerations for real-time applications. In Section \ref{sec:simulations} we present results for artificial data and in Section \ref{sec:application} we evaluate the model on data from a simulator of telecom signals in a wireless network. Section \ref{sec:discussion} concludes and outlines some directions for future research.

\section{Regression with covariates subject to detection limits}\label{sec:model}

We consider the linear regression
\begin{equation}\label{eq:model}
    y_i=\beta_0+\tilde{\v\beta}^{\top} \v x_i+\varepsilon_i,\hspace{1em}\varepsilon_i\overset{iid}{\sim} \mathcal{N}(0,\sigma^2)\hspace{1em}\text{for } i=1,\ldots,n,
\end{equation}
where $y_{i}$ is the response variable and $\v x_i = (x_{i1},\ldots,x_{ip})^\top$ denotes the vector of covariates for the $i$th observation. 

The covariate value $x_{ij}$ is not observed if it falls below a detection limit $c_{ij}$. Depending on the application, the detection limit may be: i) different for different covariates but the same across observations, $c_{ij}=c_j$ for all $i$, or ii) the same for each covariate but vary across observations, $c_{ij}=c_i$ for all $i$. The first setting is exemplified in \citet{yue2016bayesian} where the covariates are levels of vitamin D2 and D3 that are each subject to known detection limits. The latter setting is common for signal data, e.g. telecommunication signals, where the covariates are different signal strength measurements  and the weaker signals may be drowned by the strongest signal. The detection limit is here determined by the strongest signal and is therefore the same for all covariates in a given observation, but will differ across observations. We will take all $c_{ij}$ to be known here (once the observed covariates have been collected) and discuss the case with unknown detection limits in Section \ref{sec:discussion}. 

Following \citet{yue2016bayesian} we model the covariate vector for the complete data as a multivariate regression model given a vector of $r$ completely observed auxiliary covariates $\v w_{i}$:
\begin{equation}
\v x_i=\v\Gamma^{\top} \v w_i+\v v_i\hspace{1em}\v v_i\overset{iid}{\sim} \mathcal{N} (0,\v\Omega),
\end{equation}
where $\v\Gamma$ is the $r\times p$ matrix of regression coefficients (including intercepts) and
$\v\Omega$ is the $p\times p$ positive definite error covariance matrix. The observations falling below the detection limit in $\v x_i$ will be inferred using the information in both $\v w_{i}$ (when available) and the observed covariates in $\v x_i$.

The complete model can be written
\begin{align}\label{eq:multi_model}
     y_i &= \beta_0 + \tilde{\v\beta}^{\top} \v x_i+\varepsilon_i,\hspace{1em}\varepsilon_i\overset{iid}{\sim} \mathcal{N}(0,\sigma^2). \\
    \v x_i &= \v\Gamma^{\top} \v w_i+\v v_i,\hspace{1em}\v v_i\overset{iid}{\sim} \mathcal{N} (\v 0,\v\Omega),
\end{align}
where $x_{ij}$ is unobserved if $x_{ij}<c_{ij}$. We will use the notation where the complete $p$-dimensional covariate vector for the $i$th observation, $\mathbf{x}_i$, is decomposed into $p_\mathrm{o}$ observed covariate values $\mathbf{x}^{(\mathrm{o})}_i$ and $p_\mathrm{m}$ missing covariate values $\mathbf{x}^{(\mathrm{m})}_i$, with the understanding that either of these subvectors can be empty.

We take a Bayesian approach and assume the following prior with independence between the blocks of parameters, except $\v \Gamma$ and $\v \Omega$,
\begin{align*}
\beta_0 & \sim \mathcal{N}(0,\tau_{\beta_0}^{2}) \\
\tilde{\v \beta} & \sim \mathcal{N}(0,\tau_{\tilde \beta}^{2}\v I_{p})\\
\sigma^{2} & \sim\mathrm{IG}(a,b)\\
\v \gamma \vert \v \Omega & \sim \mathcal{N}(\v 0,\v \Omega \otimes \tau_{\gamma}^{2}\v I_{r})\\
\v \Omega & \sim\mathrm{IW}(\v A, \kappa),
\end{align*}
where $\gamma = \operatorname{vec}\v \Gamma$ stacks the columns of $\v \Gamma$ in a vector, $\mathrm{IG}(a,b)$ is the inverse Gamma distribution and $\mathrm{IW}(\v A, \kappa)$ is the inverse
Wishart distribution with $\kappa$ degrees of freedom. The prior settings used in the experiments and in the telecom application can be found in Appendix \ref{Appendix:data}.

\section{Bayesian inference}\label{sec:inference}

This section presents the Gibbs sampling algorithm for sampling from the joint posterior distribution of all the model parameters and the missing covariate values. Our proposed method samples all missing values jointly and we show the importance of this feature, particularly for real-time/online prediction. 

\subsection{Gibbs sampling}\label{subsec:Gibbs}

The joint posterior distribution of the model parameters and missing covariates is intractable. We sample from the joint posterior using Gibbs sampling with the following updating steps.

\subsubsection*{\textbf{Updating} $\boldsymbol{\beta}$} The full conditional posterior of $\boldsymbol{\beta}$ is
\begin{equation*}
    \v\beta | \v y, \v X, \v W, \sigma^2, \v  \gamma, \v \Omega \sim \mathcal{N}(\v \mu_{\beta}, \v \Sigma_{\beta}),
\end{equation*}
where $\v \beta = \big( \beta_0,\tilde{\v \beta}^\top \big) ^\top$,
\begin{align*}
    \v \mu_{\beta} & = \Bigg(\frac{1}{\sigma^2}\v X^\top \v X+\v D^{-1}\Bigg)^{-1} \Bigg(\frac{1}{\sigma^2} \v X^\top \v y\Bigg) \\
    \Sigma_{\beta} & = \Bigg(\frac{1}{\sigma^2}\v X^\top \v X+\v D^{-1}\Bigg)^{-1}
\end{align*}
$\v y = (y_1,\ldots,y_n)^\top$, $\v X = (\v x_1,\ldots,\v x_n)^\top$, $\v W = (\v w_1,\ldots,\v w_n)^\top$,
and
\begin{align*}
\v D & = \begin{bmatrix} \tau_{\beta_0}^2 & \v0\\ 
                        \v0 & \tau_{\tilde{\beta}}^2 \cdot \v I_p
        \end{bmatrix}.
\end{align*}

\subsubsection*{\textbf{Updating} $\boldsymbol{\sigma^2}$} The full conditional posterior for $\sigma^2$ is

\begin{equation*}
   \sigma^2 |  \v y, \v X, \v W, \v \beta, \v  \gamma, \v \Omega \sim \text{IG}(\tilde a, \tilde b)
\end{equation*}
where
\begin{align*}
    \tilde a & = \frac{n}{2} + a \\
    \tilde b & = \frac{(\v y - \v X \v \beta)^\top(\v y - \v X \v \beta)}{2}+b.
\end{align*}

\subsubsection*{\textbf{Updating} $\v \Omega$ \textbf{and} $\boldsymbol{\Gamma}$} 
The joint conditional posterior of $\v \Omega$ and $\gamma = \operatorname{vec}\v \Gamma$ is

\begin{align*}
    \v \Omega| \v y, \v X, \v W, \v \beta, \sigma^2 \sim & \text{ IW}\Big(\tilde{\v A}, n + \kappa \Big) \\
   \v \gamma  | \v \Omega, \v y, \v X, \v W, \v \beta, \sigma^2 \sim & \mathcal{ N}_{pr}\Bigg[
                                                    \tilde{\v \gamma}, \v \Omega \otimes \big(\tau^{-2}_{\v \gamma}\v I_r + \v W^{\top} \v W \big)^{-1} \Bigg],
\end{align*} 
where
\begin{align*}
    \tilde{\v A} &=   \v A + \tau^{-2}_{\v \gamma} \tilde{\v \Gamma}^{\top} 
                                \tilde{\v \Gamma} + \big(\v X - \v W\tilde{\v \Gamma}\big)^{\top}\big(\v X - \v W\tilde{\v \Gamma}\big)\\
    \tilde{\v \Gamma} &=  \big(\tau^{-2}_{\v \gamma}\v I_r + \v W^{\top}\v W\big)^{-1} \v W^{\top}\v X,
\end{align*}
and $\tilde{\v \gamma} = \operatorname{vec}(\tilde{\v \Gamma})$.

When auxiliary variables are not available, the auxiliary regression has only an intercept and we obtain the following special case of the above result by inserting $\v W = (1,\ldots,1)^\top$
\begin{align*}
    \v \Omega| \v y, \v X, \v \beta, \sigma^2 \sim & \text{IW}\Big(\tilde{\v A}, n + \kappa \Big) \\
   \v \gamma  | \v \Omega, \v y, \v X, \v \beta, \sigma^2 \sim & \mathcal{ N}_{pr}\Bigg[
                                                    \tilde{\v \gamma}, \big(\tau^{-2}_{\v \gamma} + n\big)^{-1}\v \Omega \Bigg],
\end{align*} 
where
\begin{align*}
    \tilde{\v A} &=  \v A + \tau^{-2}_{\v \gamma} \tilde{\v \gamma}^{\top} 
                                \tilde{\v \gamma} + \sum_{i=1}^n  ( \v x_i- \tilde{\v \gamma})( \v x_i-\tilde{\v \gamma})^\top\\
    \tilde{\v \gamma} &=  \big(\tau^{-2}_{\v \gamma} + n\big)^{-1} \sum_{i=1}^n \v x_i.
\end{align*}

\subsubsection*{\textbf{Updating} $\xmis_1,\ldots,\xmis_n$}

The missing values in a given observation, $\xmisi$, are conditionally independent of the missing values in all other observations. Each $\xmisi$ vector can therefore be drawn in parallel from truncated multivariate normal distributions:

\begin{equation*}
    \xmisi |\xobsi, \v y, \v W, \v \beta, \sigma^2, \v \gamma, \v \Omega \sim \mathcal{N}(\boldsymbol{\mu}_{\xmisi}, \boldsymbol{\Omega}_{\xmisi}; \xmisi \leq \v c_i),
\end{equation*}
where $\v c_i$ is a vector of detection limits for $\xmisi$,
\begin{align*}
     \boldsymbol{\mu}_{\xmisi} & = \Bigg(\frac{1}{\sigma^2}\v \beta_\mathrm{m} \v \beta_\mathrm{m}^\top + \bar{\v \Sigma}_i^{-1}\Bigg)^{-1} \Bigg(\bar{\v \Sigma}_i^{-1} \bar{\v \mu}_i+\frac{\tilde{y}_i}{\sigma^2} \v \beta_\mathrm{m}\Bigg)\\
    \boldsymbol{\Omega}_{\xmisi} & = \Bigg(\frac{1}{\sigma^2}\v \beta_\mathrm{m} \v \beta_\mathrm{m}^\top + \bar{\v \Sigma}_i^{-1}\Bigg)^{-1},
\end{align*}
$\tilde{y}_i = y_i - \beta_0 - \v \beta_\mathrm{o}^\top \xobsi$ and $\boldsymbol{\beta}_{\mathrm{m}}$ and $\boldsymbol{\beta}_{\mathrm{o}}$ are the subsets of $\v \beta$ corresponding to the missing and observed values in $\mathbf{x}_i$, respectively. Finally, 
\begin{align*}
\bar{\v \mu}_i &= \xmishat_i +\v \Omega_{\mathrm{m}\mathrm{o}}\v \Omega_{\mathrm{o}\mathrm{o}}^{-1}(\xobsi-\xobshat_i) \\
\bar{\v \Sigma}_i &= \v \Omega_{\mathrm{m}\mathrm{m}} - \v \Omega_{\mathrm{m}\mathrm{o}}\v \Omega_{\mathrm{o}\mathrm{o}}^{-1}\v \Omega_{\mathrm{o}\mathrm{m}},
\end{align*}
with $\hat{\v x}_i = (\xmishat_i{}^\top,\xobshat_i{}^\top)^\top = \v \Gamma^\top \v w_i$ and the corresponding decomposition of
\begin{equation*}
    \v \Omega =
    \begin{pmatrix}
        \v \Omega_{\mathrm{m}\mathrm{m}} & \v \Omega_{\mathrm{m}\mathrm{o}}\\
        \v \Omega_{\mathrm{o}\mathrm{m}} & \v \Omega_{\mathrm{o}\mathrm{o}}
    \end{pmatrix},
\end{equation*}
subject to the reordering of rows and columns so that the missing values are first in $\v x_i$.

We use the efficient algorithm for sampling from truncated multivariate normal distributions recently developed in \citet{botev2017normal}.

\subsection{Joint sampling of missing values}

Our Gibbs sampling algorithm in Section \ref{subsec:Gibbs} draws all missing values in an observation jointly rather than drawing each missing value conditional on the other missing values in the observation, as in earlier literature, see e.g. \cite{yue2016bayesian}. The gains from a multivariate approach depends on how correlated the missing values are in the posterior distribution, which varies from observation to observation. This is important since we need to sample all missing covariate values in the test observation in the prediction step, and computing time is often crucial at prediction time. As shown below, a Gibbs sampler with univariate updating steps will require a substantially larger number of iterations to obtain the same precision for a test observation with large posterior correlations between missing values. 

Lemma \ref{lem:jointsampling} derives the correlation matrix in the full conditional posterior of the missing values $\xmis$ for an arbitrary observation, to explore when substantial posterior correlation occurs. The lemma shows that the posterior correlation can be substantial in a number of different scenarios depending on how correlated the $\xmis$ are conditional on $\xobs$, and on the correlation between each missing value and $y$, conditional on $\xobs$.

\begin{lemma}\label{lem:jointsampling}
Let $\rho(\xmis|\xobs, y)$ denote the correlation
matrix in the full conditional posterior for the missing covariates
$\xmis=( x^{(\mathrm{m})}_1,...,x^{(\mathrm{m})}_{p_\mathrm{m}})$ in a given observation,
conditional on the observed covariates $\xobs$ and all
model parameters $\v \beta,\sigma^2,\v \Gamma$ and $\v \Omega$; The conditioning on the model parameters are suppressed in the notation
for clarity. Then, 
\[
\rho(\xmis|\xobs, y)=\boldsymbol{R}\left(\rho(\xmis|\xobs)-\rho(\xmis,y|\xobs)\rho(\xmis,y|\xobs){}^{\top}\right)\boldsymbol{R}
\]
where $\rho(\xmis\vert\xobs)$ is the correlation matrix among $\xmis$ conditional on $\xobs$ but not on $y$, $\rho(y,\xmis|\xobs)$ is the
column vector of conditional correlations between $y$ and each missing covariate and
$$\boldsymbol{R}=\mathrm{Diag}\left(1/\sqrt{\left(1-\rho^{2}(x^{(\mathrm{m})}_1,y|\xobs)\right)},\ldots,1/\sqrt{\left(1-\rho^{2}(x^{(\mathrm{m})}_{p_\mathrm{m}},y|\xobs)\right)}\right).
$$
\end{lemma}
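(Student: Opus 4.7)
The plan is to recognize that, conditional on $\xobs$ and the model parameters, the vector $(\xmis,y)$ is jointly multivariate normal, and then apply the standard Gaussian conditioning formula for the covariance of $\xmis$ given $y$, before finally rescaling to obtain a correlation matrix.

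First I would exhibit the joint Gaussian distribution of $(\xmis,y)\mid \xobs$. From the auxiliary model, $\xmis\mid \xobs$ is Gaussian with some covariance $\bar{\v\Sigma}$ (the same $\bar{\v\Sigma}_i$ that already appears in the Gibbs updating step for $\xmisi$), and the regression equation gives $y\mid \xmis,\xobs \sim \mathcal{N}(\beta_0+\v\beta_\mathrm{o}^\top\xobs+\v\beta_\mathrm{m}^\top\xmis,\sigma^2)$. Composing these two gives the joint covariance
\[
\mathrm{Cov}\bigl((\xmis,y)\mid\xobs\bigr)=
\begin{pmatrix}\bar{\v\Sigma} & \bar{\v\Sigma}\v\beta_\mathrm{m}\\ \v\beta_\mathrm{m}^\top\bar{\v\Sigma} & \sigma^2+\v\beta_\mathrm{m}^\top\bar{\v\Sigma}\v\beta_\mathrm{m}\end{pmatrix}.
\]
Applying the Schur-complement formula for multivariate normal conditioning then yields
\[
\mathrm{Cov}(\xmis\mid\xobs,y)=\bar{\v\Sigma}-\frac{\bar{\v\Sigma}\v\beta_\mathrm{m}\v\beta_\mathrm{m}^\top\bar{\v\Sigma}}{\sigma^2+\v\beta_\mathrm{m}^\top\bar{\v\Sigma}\v\beta_\mathrm{m}}.
\]

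Next I would move to a correlation-centric notation. Writing $\bar{\v D}=\mathrm{Diag}(\bar{\v\Sigma})^{1/2}$ and $\bar\sigma_y=\sqrt{\mathrm{Var}(y\mid\xobs)}$, the covariance factorisation $\bar{\v\Sigma}=\bar{\v D}\,\rho(\xmis\mid\xobs)\,\bar{\v D}$ and $\bar{\v\Sigma}\v\beta_\mathrm{m}/\bar\sigma_y=\bar{\v D}\,\rho(\xmis,y\mid\xobs)$ let me rewrite the right-hand side as
\[
\mathrm{Cov}(\xmis\mid\xobs,y)=\bar{\v D}\left[\rho(\xmis\mid\xobs)-\rho(\xmis,y\mid\xobs)\rho(\xmis,y\mid\xobs)^\top\right]\bar{\v D}.
\]
This is the key structural identity; the outer product in brackets captures exactly the information that conditioning additionally on $y$ removes from the cross-covariance structure.

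Finally I would read off the marginal conditional variances from the diagonal. For each $k$, the diagonal of the bracketed matrix is $1-\rho^{2}(x^{(\mathrm m)}_k,y\mid\xobs)$, so $\mathrm{Var}(x^{(\mathrm m)}_k\mid\xobs,y)=\bar{\v D}_{kk}^2\bigl(1-\rho^{2}(x^{(\mathrm m)}_k,y\mid\xobs)\bigr)$. Normalising the displayed covariance by these standard deviations cancels $\bar{\v D}$ and introduces exactly the diagonal factor $\boldsymbol R$ defined in the statement on each side, producing the claimed formula for $\rho(\xmis\mid\xobs,y)$.

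The only delicate step is the bookkeeping between covariance and correlation scalings — in particular verifying that the diagonal entries of the bracketed matrix give precisely the classical variance-reduction factor $1-\rho^{2}(x^{(\mathrm m)}_k,y\mid\xobs)$, so that the outer $\boldsymbol R$ factors emerge cleanly on both sides. Everything else is either the routine Gaussian conditioning calculation or direct rewriting.
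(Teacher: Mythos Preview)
Your argument is correct and essentially parallels the paper's proof. The only difference is in the opening step: the paper starts from the full conditional posterior covariance $\bigl(\sigma^{-2}\v\beta_\mathrm{m}\v\beta_\mathrm{m}^\top+\bar{\v\Sigma}^{-1}\bigr)^{-1}$ already derived in the Gibbs updating step and applies the Sherman--Morrison formula to invert it, whereas you build the joint covariance of $(\xmis,y)\mid\xobs$ and apply the Schur-complement conditioning formula. These two routes are algebraically the same identity viewed from opposite ends (Sherman--Morrison is precisely the rank-one Schur complement inversion), and both land on the expression $\bar{\v\Sigma}-\bar{\v\Sigma}\v\beta_\mathrm{m}\v\beta_\mathrm{m}^\top\bar{\v\Sigma}/(\sigma^2+\v\beta_\mathrm{m}^\top\bar{\v\Sigma}\v\beta_\mathrm{m})$. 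From that point on your rescaling to correlations, the extraction of the diagonal variance-reduction factor $1-\rho^2(x^{(\mathrm m)}_k,y\mid\xobs)$, and the final normalization by $\boldsymbol R$ match the paper's steps exactly. Your framing via joint Gaussianity is arguably the more transparent of the two, since it makes immediately clear why the formula is nothing more than the standard partial-correlation identity.
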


\begin{proof}
See Appendix \ref{Appendix:Proofs}.
\end{proof}

Table \ref{tab:botev} shows results from a simulation experiment comparing univariate versus joint sampling of the missing covariates. The table presents the effective sample size (ESS) ratio $\mathrm{ESS}_\mathrm{multi}/\mathrm {ESS}_\mathrm{uni}$ for the missing covariate values in five simulated data sets with $n=1000$ observations on $p = 40$ covariates and approximately 40 \% censored values. See Appendix \ref{Appendix:data} for more details about the data generating process. It can be seen that joint sampling is never less than half as efficient as univariate sampling, while it can be at least two orders of magnitude as efficient for some missing values compared to univariate updating. Furthermore, a joint update of all missing values is on average twice as fast as a set of univariate updating steps in this simulation. \\

\begin{table}[]
    \centering
 \begin{tabular}[!htb]{ccrrrrr}\toprule 
  & & \multicolumn{5}{c}{Quantiles}   \\
    Dataset & & 0\%  &  25\%      &     50\%   &    75\%   & 100\% \\
    \midrule
    1 & & 0.52 &  1.03 & 1.25 & 6.61 & 43.8 \\
    2 & & 0.51 & 1.04 & 1.29 & 9.08 & 65.6 \\
    3 & & 0.54 & 1.03 & 1.22 & 6.62 & 61.1 \\
    4 & & 0.48 & 1.03 & 1.24 & 8.00 & 57.7 \\
    5 & & 0.55 &  1.03 &  1.25 & 10.13 & 141.7 \\ \bottomrule
\end{tabular}    \caption{Ratios of the effective sample size (ESS) comparing joint sampling of missing values to univariate updates. Each row represents the quantiles of the ratio $\mathrm{ESS}_\mathrm{multi}/\mathrm {ESS}_\mathrm{uni}$ for a simulated dataset with $n=1000$ observations on $p = 40$ covariates and approximately 40 \% censored values. The computing time for a joint update is approximately twice as fast as a sequence of univariate updates.}
    \label{tab:botev}
\end{table}

\subsection{Increasing efficiency by using Random scan}


In this subsection we explore the use of a random scan update \citep{amit1991comparing} of the missing values. A random scan sampler updates each $\xmisi$ vector with a fixed probability at each Gibbs iteration to increase the sampling efficiency for a given time budget. A small update probability will make each Gibbs iteration fast, but also less efficient since many $\xmisi$ are left unchanged at the iteration. 

Figure \ref{fig:ess} uses the simulation setup in Section \ref{sec:simulations} to illustrate the ratio of ESS per time step for the random scan Gibbs sampler compared to ESS per time step for the regular Gibbs sampler, as a function of the update probability. The top left plot is for the average ESS ratio for the predictive distributions, the top right plot for the $\v \beta$ coefficients and the average ESS for the missing values can be seen in the bottom plot. The ESS ratio for the missing values starts to deteriorate for low update probabilities, whereas the ESS ratio for the predictive distributions indicates that it is more efficient to use random scan with a rather low update probability. Of course, a too small update probability risks the convergence of the Gibbs sampler in realistic computing times, and we will use an update probability of $0.2$ as a compromise to achieve high efficiency for $\v \beta$ and the predictive distributions without sacrificing efficiency for the missing values.

\begin{figure}[!h]
\centering
\begin{subfigure}{.45\linewidth}
    \centering
    \includegraphics[width=\textwidth]{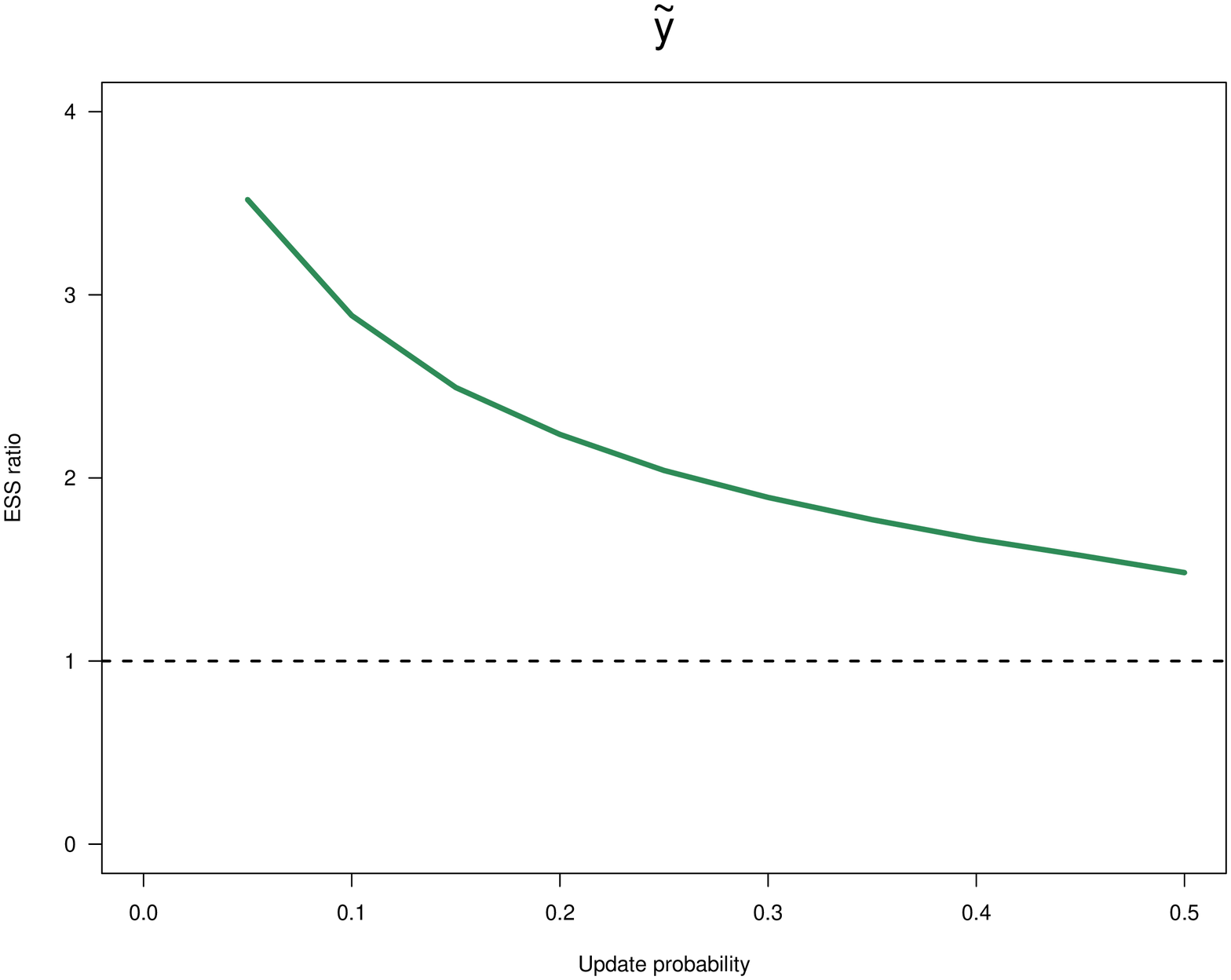}
\end{subfigure}
\begin{subfigure}{.45\linewidth}
    \centering
    \includegraphics[width=\textwidth]{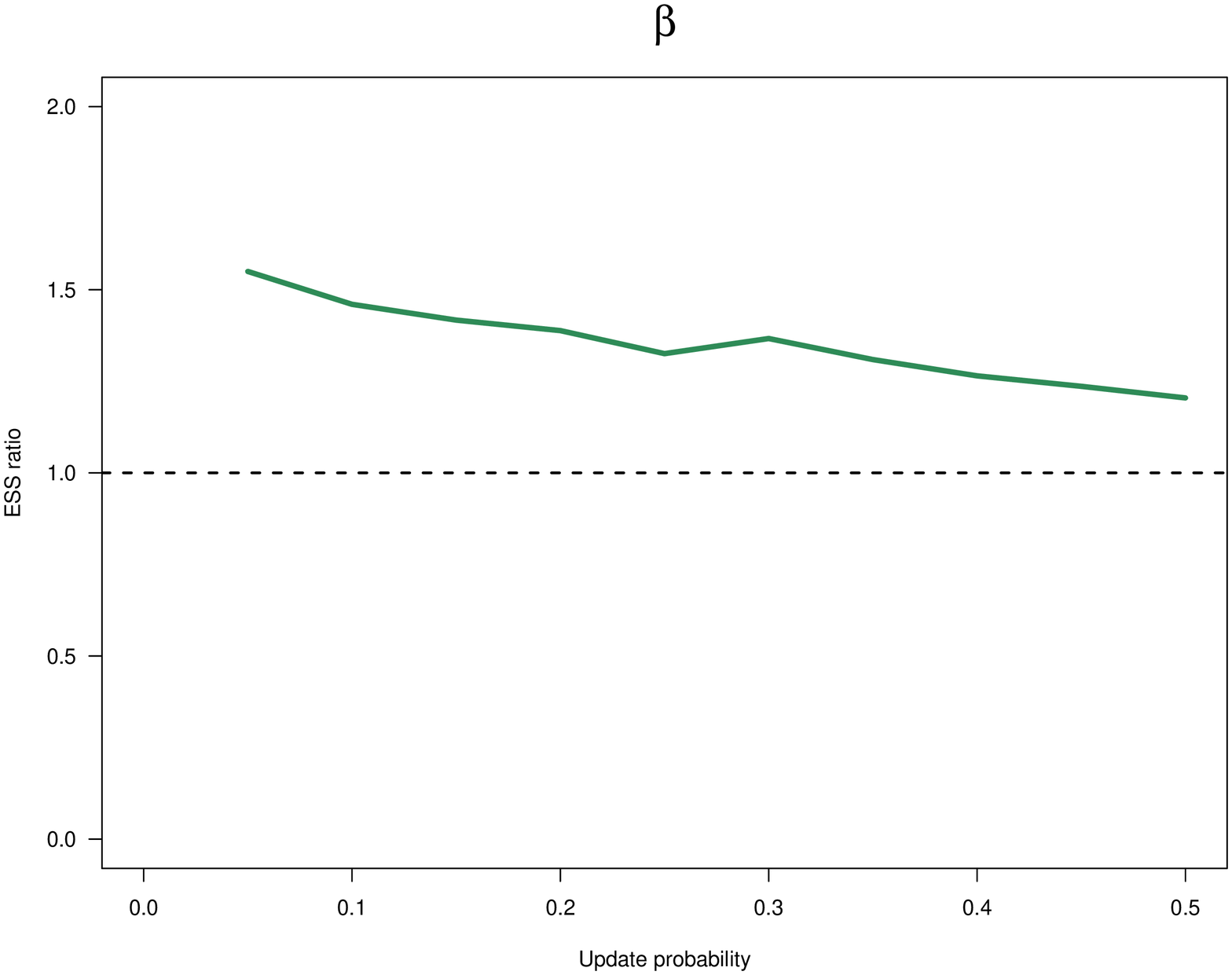}
\end{subfigure}

\begin{subfigure}{.45\linewidth}
    \centering
    \includegraphics[width=\textwidth]{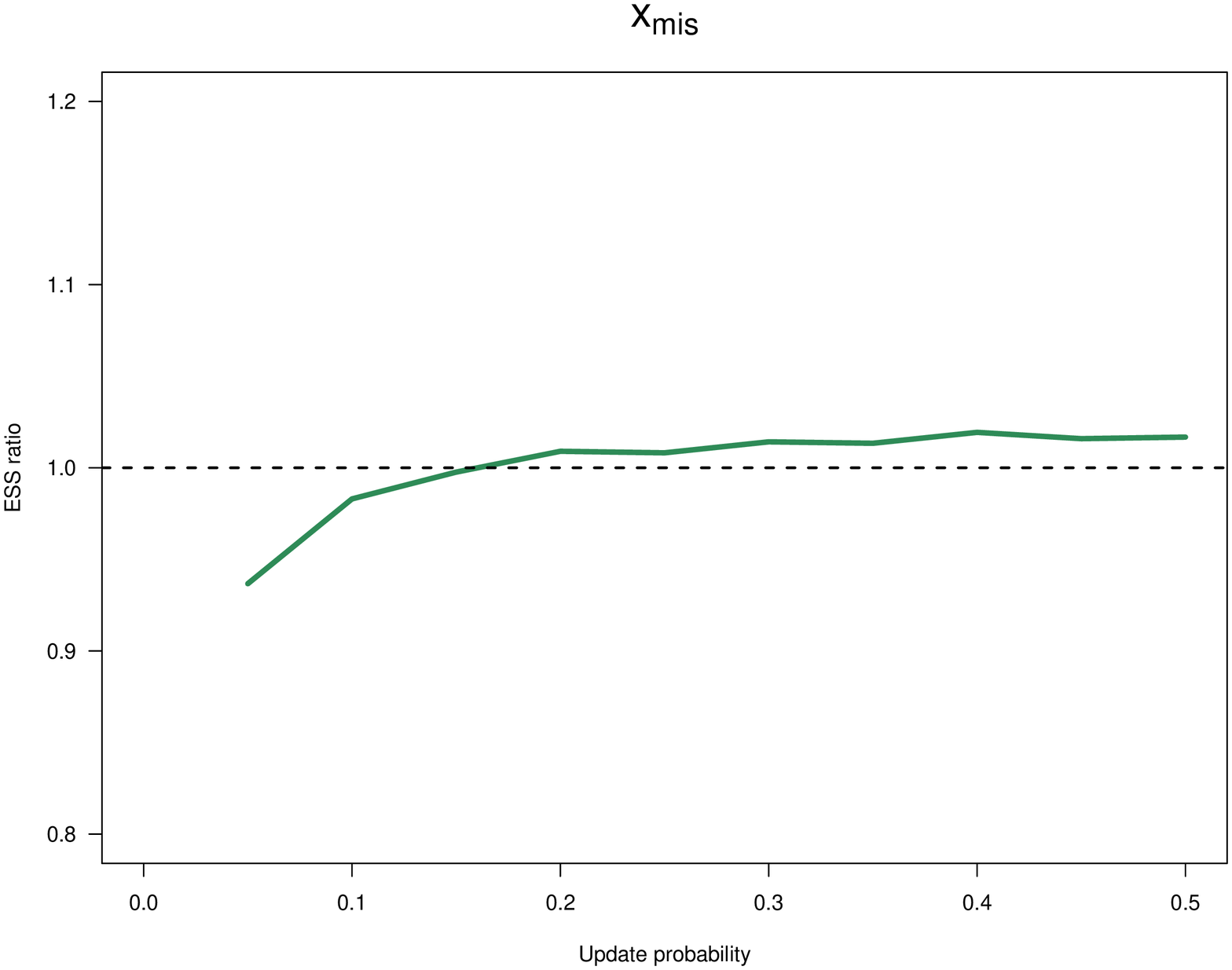}
\end{subfigure}

\caption{Ratio of average ESS per time step for the random scan Gibbs sampler compared to ESS per time step for the regular Gibbs sampler, as a function of the update probability, for the predictive distributions (top left), the $\v \beta$ coefficient distributions (top right) and the missing value distributions (bottom).}\label{fig:ess}
\end{figure}
%

\section{Online predictive distributions}\label{sec:prediction}

This section derives the predictive distribution and discusses a computational complication that surfaces when covariate values are missing in the test data. 

\subsection{Predictive distribution}
Consider first the case with complete data. Let $\tilde{\v y}$ denote the vector of $\tilde n$ response observations in the test data and $\tilde{\v X}$ the corresponding covariate values. The posterior predictive distribution is then
\begin{align*}
    p(\tilde{\v y}|\v y, \tilde{\v X},  \v X) = & \int p(\tilde{\v y}| \tilde{\v X}, \v \beta, \sigma^2)p(\v \beta, \sigma^2 | \v y, \v X) \text{ d} \v \beta \text{ d} \sigma^2.
\end{align*}
We can easily obtain samples from this predictive distribution by 
drawing parameters from the posterior $p(\v \beta, \sigma^2 | \v y, \v X)$ and for each parameter draw simulate from the model $p(\tilde{\v y}| \tilde{\v X}, \v \beta, \sigma^2)$. The posterior $p(\v \beta, \sigma^2 | \v y, \v X)$ remains fixed regardless of how much test data are available.

Missing covariate values introduce a complication: to obtain the exact posterior predictive distribution for a new batch of test data we need to run the Gibbs sampler for all the data, both training and test data. The reason for having to revisit the training data in the test stage is that the observed covariate observations in the test data gives information about the $\v \Gamma$ and $\v \Omega$, thereby bringing about changes in the posterior for the missing values in the training data which in turn affects the posterior distribution for $\v \beta$ and $\sigma$, and therefore finally the predictions for $\tilde{\v y}$.

To see this formally, let $\xmistest$ and $\xobstest$ denote the missing and observed covariates in a new test observation $(\tilde y, \tilde{\v x})$. The joint predictive distribution of $\tilde{y}$ and $\xmistest$ is then
\begin{align*}
    p(\tilde{y}, \xmistest|\v y, \v X^{(\mathrm{o})}, \xobstest) &=  \int p(\tilde{y}, \xmistest| \xobstest, \v \beta, \sigma^2, \v \Gamma, \v \Omega) \\
    & \hspace{0.5cm} \times p(\xmis,\v \beta, \sigma^2,\v \Gamma, \v \Omega | \v y, \v X^{(\mathrm{o})}, \xobstest) \text{ d} \xmis \text{ d} \v \beta \text{ d} \sigma^2 \text{ d} \v \Gamma \text{ d} \v \Omega. \\
    & = \int p(\tilde{y} | \xmistest, \xobstest,\v  \beta, \sigma^2) p(\xmistest | \xobstest, \v \Gamma, \v \Omega) \\
    & \hspace{0.5cm} \times p(\xmis,\v \beta, \sigma^2,\v \Gamma, \v \Omega | \v y, \v X^{(\mathrm{o})}, \xobstest) \text{ d} \xmis \text{ d} \v \beta \text{ d} \sigma^2 \text{ d} \v \Gamma \text{ d} \v \Omega.
\end{align*}
Hence, simulation from the predictive distribution of $\tilde{y}$ can be performed by:
\begin{enumerate}
    \item Gibbs sampling parameters and missing values in the training data from $$p(\xmis,\v \beta, \sigma^2,\v \Gamma, \v \Omega | \v y, \v X^{(\mathrm{o})}, \xobstest)$$
    \item Simulating missing values for the test observation from 
    $$\xmistest \vert \xobstest, \v \Gamma, \v \Omega \sim \mathcal{N}( \bar{\v \mu}, \bar{\v \Omega}; \xmistest \leq \v c)$$ 
    \item Simulating the prediction from $$\tilde{y} \vert \xmistest, \xobstest,\v  \beta, \sigma^2 \sim \mathcal{N}(\v \beta_\mathrm{m}^\top \xmistest + \v \beta_\mathrm{o}^\top \xobstest, \sigma^2)$$
\end{enumerate}

The Gibbs updates in Step 1 for $\v \Gamma$ and $\v \Omega$ have $n + 1$ data points, whereas the updating steps for $\v \beta$ and $\sigma^2$ are effectively only based on the $n$ data points in the training data, since $\tilde{y}$ is not observed in test. Note also that the simulation of $\xmistest$ in Step 2 is not conditional on $\tilde{y}$ and is therefore different (simpler) than Gibbs sampling update for $\xmis$ in Step 1. All of the above apply also to the case with $\tilde n >1$ test observations, where the Gibbs updates in Step 1 for $\v \Gamma$ and $\v \Omega$ have $n + \tilde n$ data points and so on.

\subsection{A computational strategy for fast online predictions}
Note that in Step 1 in the previous subsection we need to re-run the Gibbs sampler also over the training data anytime we get a new test observation for prediction; this will clearly be computationally demanding for large training data. It is therefore of interest to explore the consequences of using the approximate but computationally cheaper strategy of using the posterior of the model parameters from only the training data when inferring the missing covariates and making prediction for the test cases. This suggests a practical batch mode strategy where the Gibbs sampler is infrequently re-run on all training and test data, e.g. overnight in high frequency streams, and then using the cheaper approximate strategy in between such re-estimation checkpoints.

We run the following experiment to illustrate how often we need to re-run the Gibbs sampler on all training data. At a given time, let there be $n$ training observations available when a batch of $\tilde n + 1$ test observations arrives, and assume for simplicity that the interest is in the prediction for the last observation in this batch; we call this the \emph{observation of interest}. We are interested in comparing the following two predictive distributions:
\begin{itemize}
    \item \textbf{Exact strategy}: Re-run the Gibbs sampler for all data, training and test, and sample from the predictive distribution for the observation of interest. 
    \item \textbf{Approximate strategy}: Use the posterior distribution of the model parameters from the training data only. Sample the missing covariates for the observation of interest and compute the predictive distribution of its response.
\end{itemize}
We let the observation of interest remain the same throughout to be able to compare across different combinations of $n$ and $\tilde n$. We use $n = 1000$ and $\tilde n = d\cdot n \text{ for } d \in \{0.1, 0.5, 1, 3, 5\}$ to successively increase the test size in the experiment. We split the $p = 5$ covariates into groups of three and two covariates, respectively, with correlation $\rho = 0.8$ within the groups and $\rho = 0$ between the groups. For more details on the data generation, see Appendix \ref{Appendix:data}. Finally, we let two covariates be insignificant in the regression model in \eqref{eq:model}. The point of interest has three missing values, where the last two have non-zero $\v \beta$ coefficients in the data generating model.

Figure \ref{fig:dirtystrategymissingparam} displays the posterior densities for the means ($\v \gamma$) and the variances (diagonal elements of $\v \Omega$) for the missing values in the observation of interest. The black density corresponds to the approximate strategy where the Gibbs sampler is not revisiting the training data as new test cases come in. The densities on the color scale from red to yellow show how the posterior changes as we get increasingly more test data before the observation of interest. Figure \ref{fig:test_dens_pred} shows the effect on the predictive density. Despite some differences in $\v \gamma$ and $\v \Omega$ in Figure \ref{fig:dirtystrategymissingparam}, all predictive densities in \ref{fig:test_dens_pred} are similar. This is to be expected since the missing values are merely one of several ingredients in the predictive distribution, and also not all covariates are significant. The results in Figure \ref{fig:test_dens_pred} is only an illustration for a particular test case, but we have found in other simulations that it is quite sufficient to revisit the training data in the Gibbs sampler rather infrequently when the focus is prediction, at least when the training data is not smaller than the test data. For the remainder of this paper, the approximate strategy will therefore be used as it is accurate enough in our setting.

\begin{figure}[!h]
\centering
\begin{subfigure}{.45\linewidth}
    \centering
    \includegraphics[width=\textwidth]{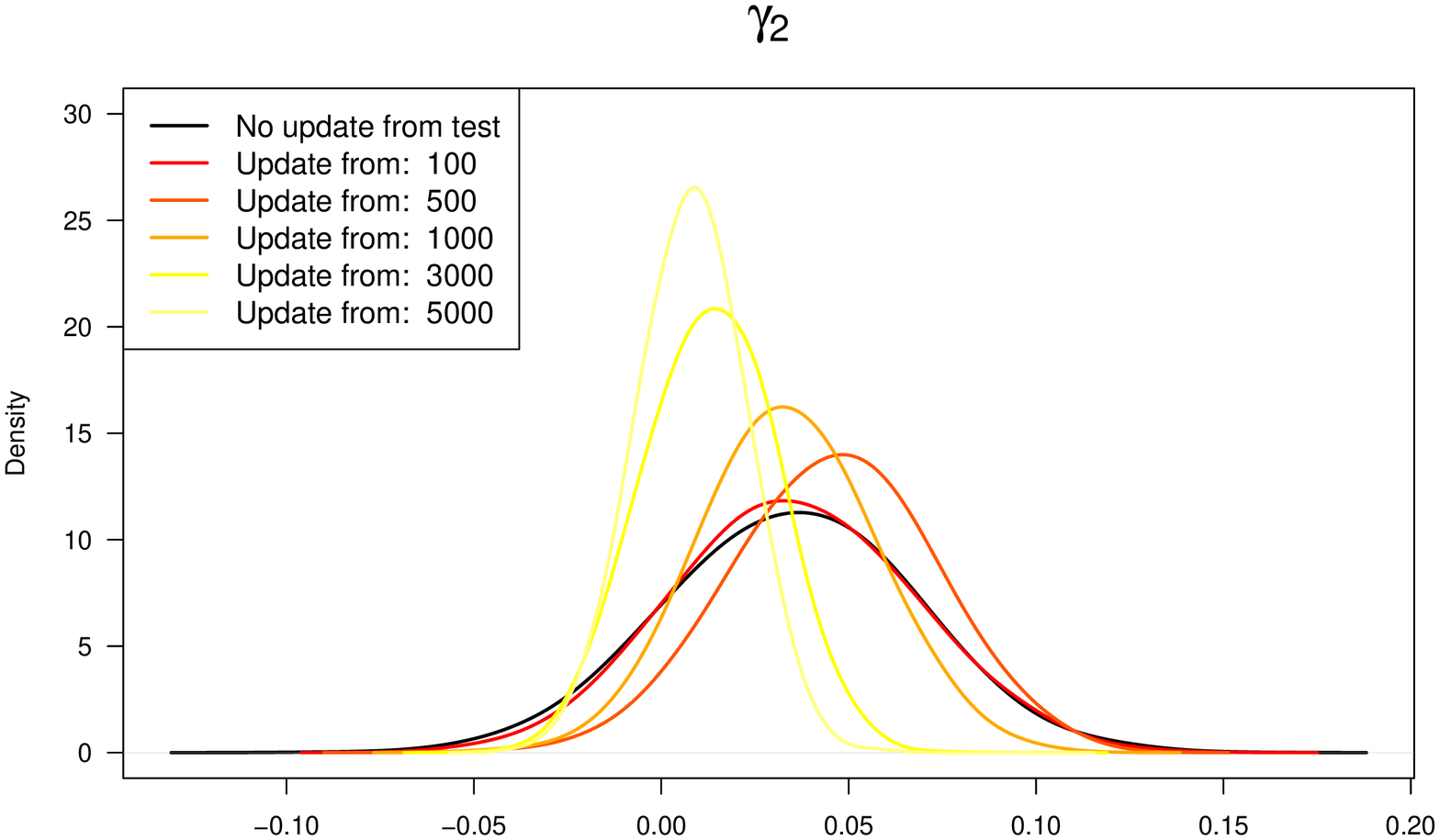}
\end{subfigure}
   \hfill
\begin{subfigure}{.45\linewidth}
    \centering
    \includegraphics[width=\textwidth]{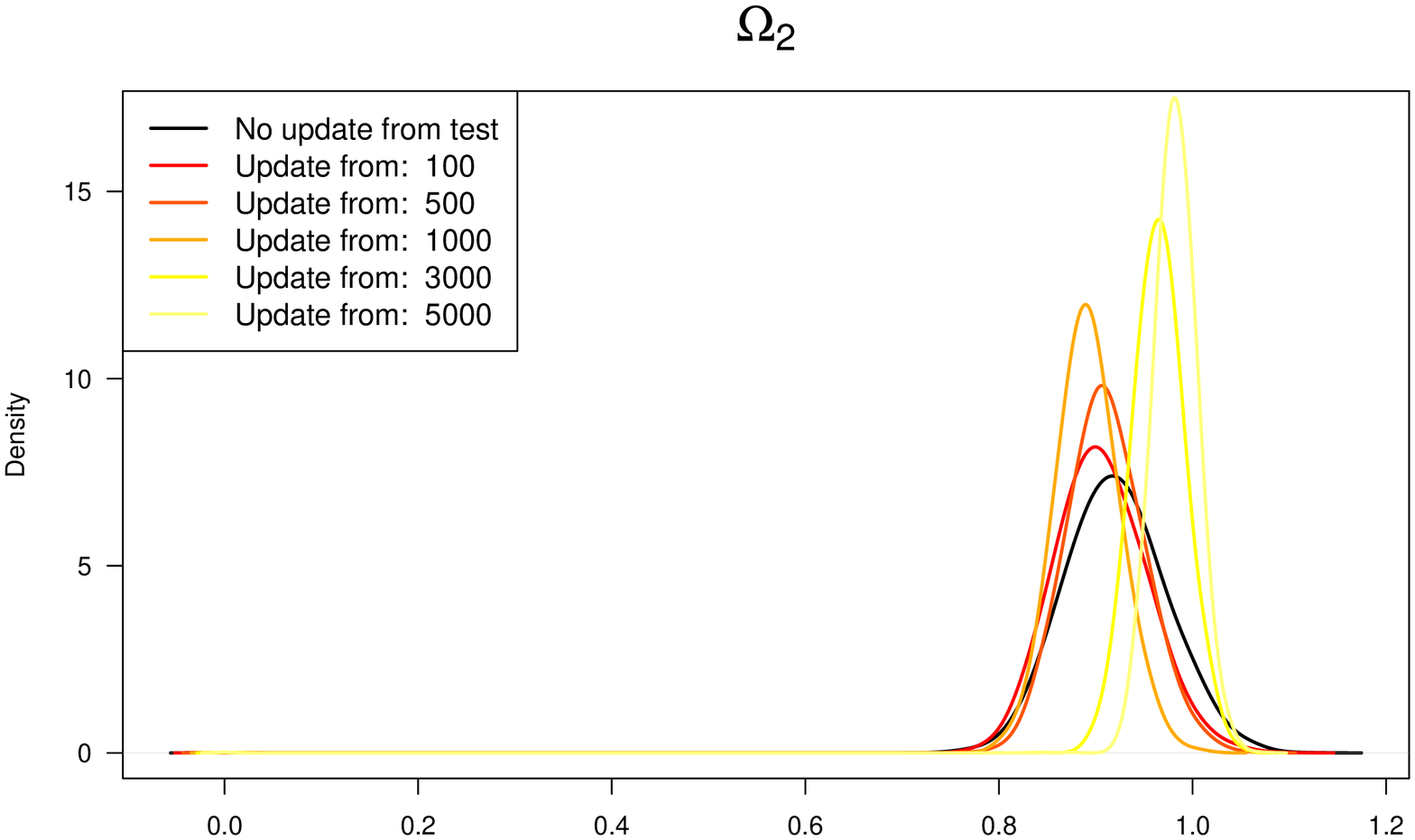}
\end{subfigure}

\begin{subfigure}{0.45\linewidth}
  \centering
    \includegraphics[width=\textwidth]{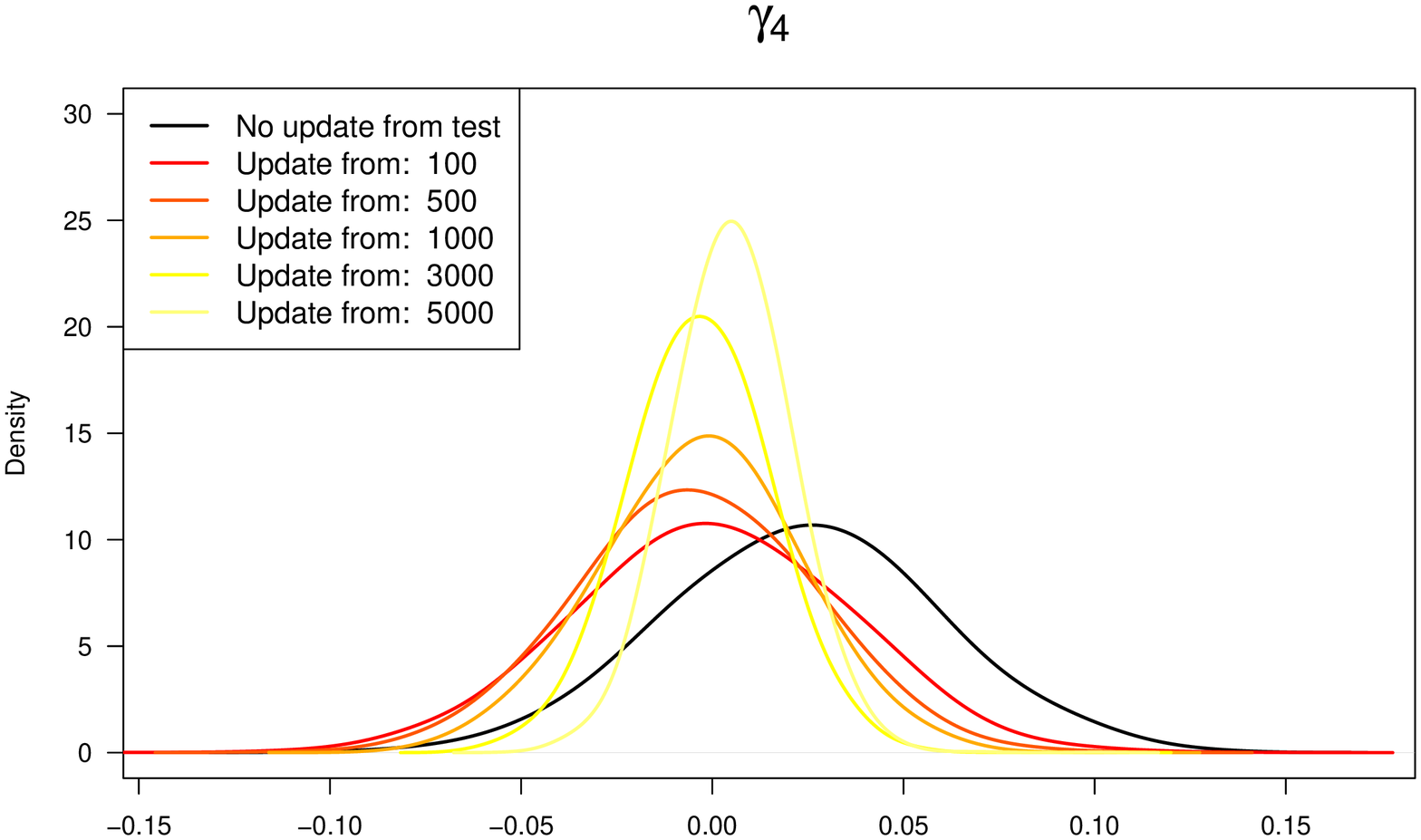}
\end{subfigure} 
\hfill
\begin{subfigure}{.45\linewidth}
  \centering
    \includegraphics[width=\textwidth]{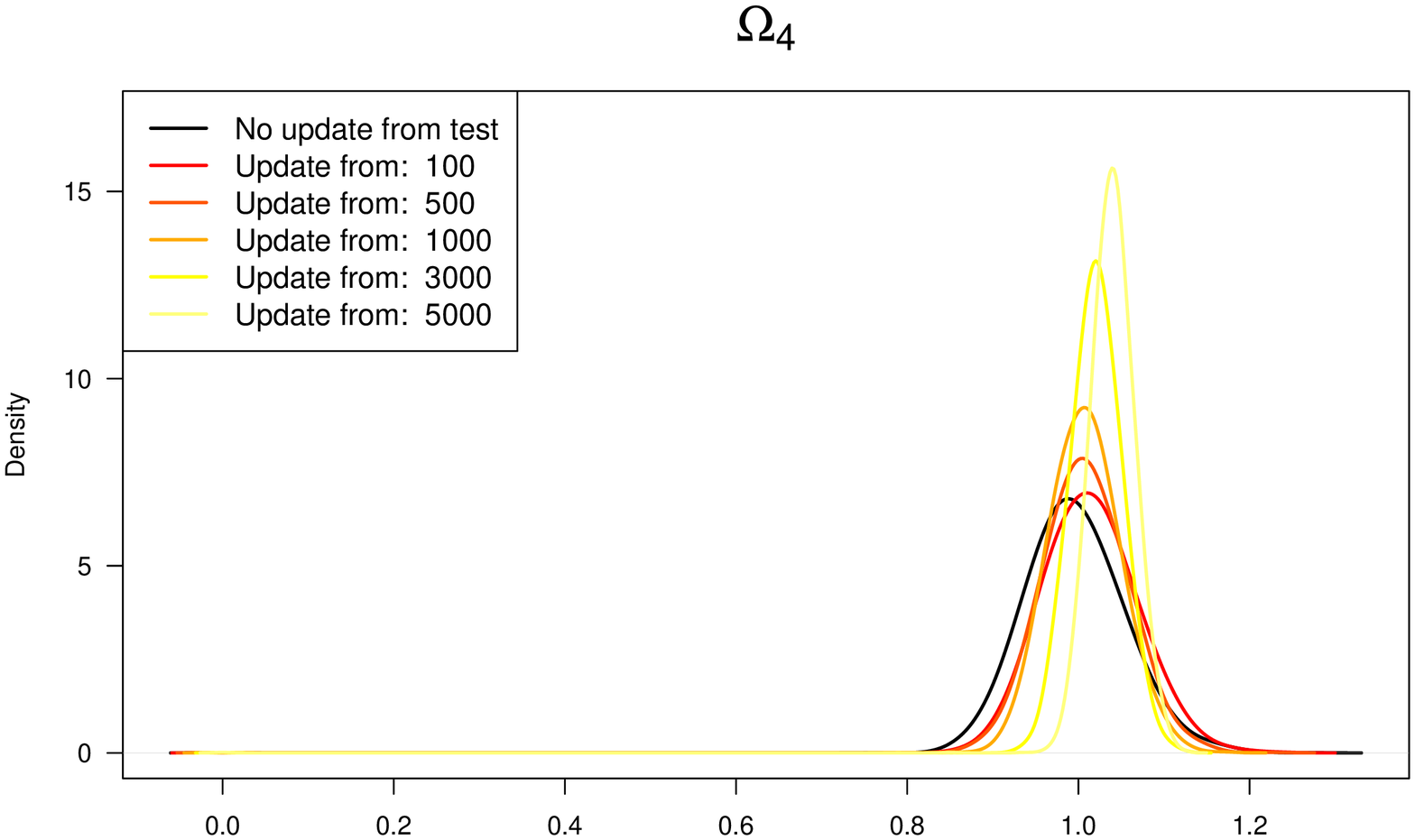}
\end{subfigure} 

\begin{subfigure}{0.45\linewidth}
  \centering
    \includegraphics[width=\textwidth]{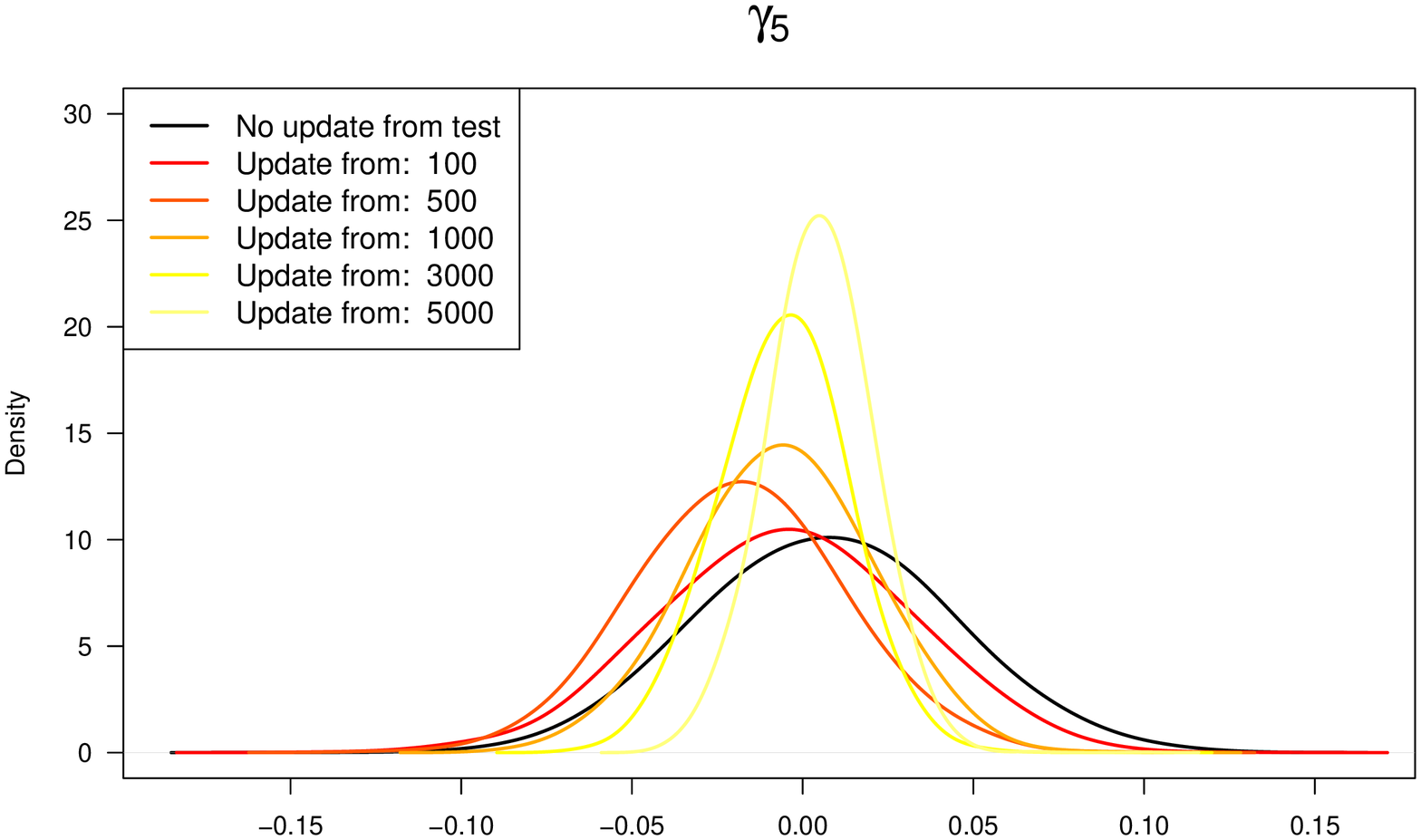}
\end{subfigure} 
\hfill
\begin{subfigure}{.45\linewidth}
  \centering
    \includegraphics[width=\textwidth]{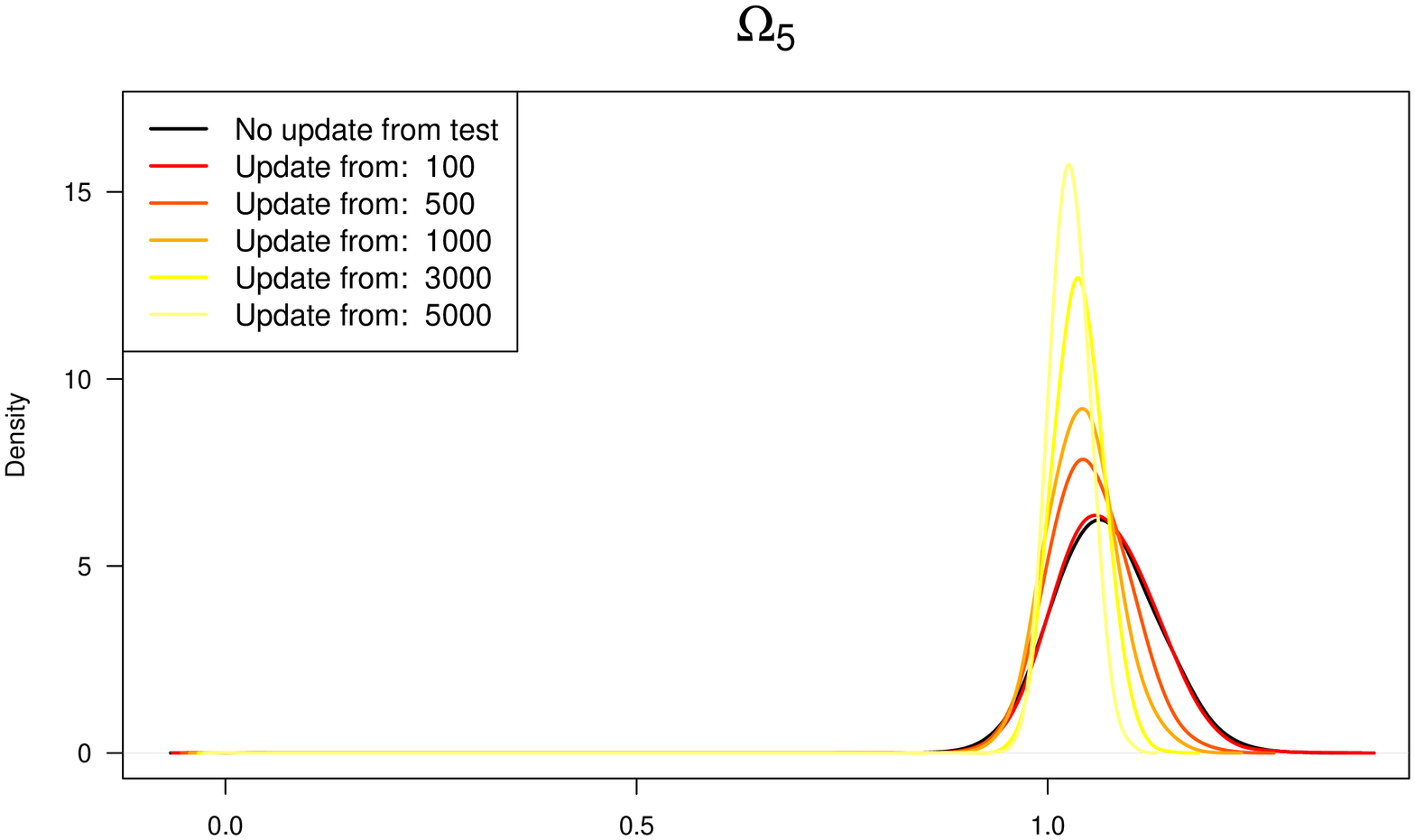}
\end{subfigure} 
\caption{The effect on the posteriors for the parameters in the missing value model, $\v \gamma$ (left column) and $\v \Omega$ (right column), as increasingly more test data is observed. The plots in each row correspond to one of the three missing values in the observation of interest.}\label{fig:dirtystrategymissingparam}
\end{figure}

\begin{figure}[!h]
\includegraphics[width=0.7\textwidth]{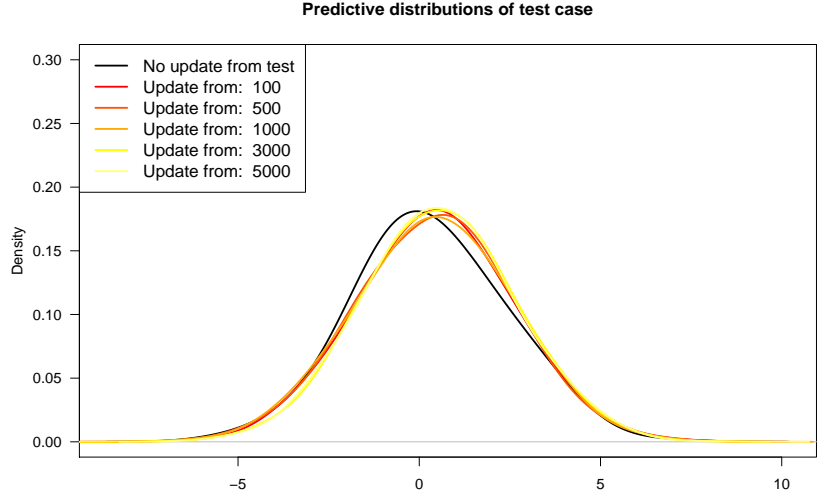}
\caption{Predictive densities for the observation as increasingly more test data is observed.}
\label{fig:test_dens_pred}
\end{figure}

\section{Simulations}\label{sec:simulations}

The simulated data used in the following subsections are of dimensions $p=40$, $n=1000$ and $\tilde{n}=1000$. The covariates are divided into two independent groups with sizes $25$ and $15$ respectively, and the correlation between any pair of covariates within each group is $\rho = 0.8$. Half of the covariates are significant in the data generating model and the $R^2$ of the regression ranges from $0.65$ to $0.80$. More details on the data simulation process is given in Appendix \ref{Appendix:data}. The data are censored according to the following  principle, which aims to mimick the censoring due to interference from the strongest signal among a set of signals (covariates):
\begin{equation}\label{eq:cens}
    x_{ij} = \begin{cases} x_{ij} & \text{if } x_{ij} \geq \text{max}(\xobsi)-\Delta \\
                            \text{max}(\xobsi)-\Delta & \text{otherwise},
    \end{cases}
\end{equation}

\noindent where $\Delta$ is a known distance from the strongest signal for which covariates of lower amplitude are still detectable. We compare our Bayesian imputation to two baselines: i) an idealized model using the uncensored (complete) data and ii) a model with missing covariates imputed with a commonly used na\"{\i}ve imputation strategy where
\begin{equation*}
    \xmisi = \text{max}(\xobsi)-\Delta,
\end{equation*}
meaning that all missing values are imputed to the lowest limit of detection.

\subsection{Assessing predictive performance on artificial data}

In this subsection, we evaluate the performance of our model on artificial data of high dimension subject to approximately $40$\% censoring; the exact censoring level varies somewhat due to the censoring mechanism in \eqref{eq:cens}. No auxillary variables are used in this experiment, the effect of such variables on the predictive performance is explored separately in the next subsection. 

\begin{figure}[!h]
\includegraphics[width=0.7\textwidth]{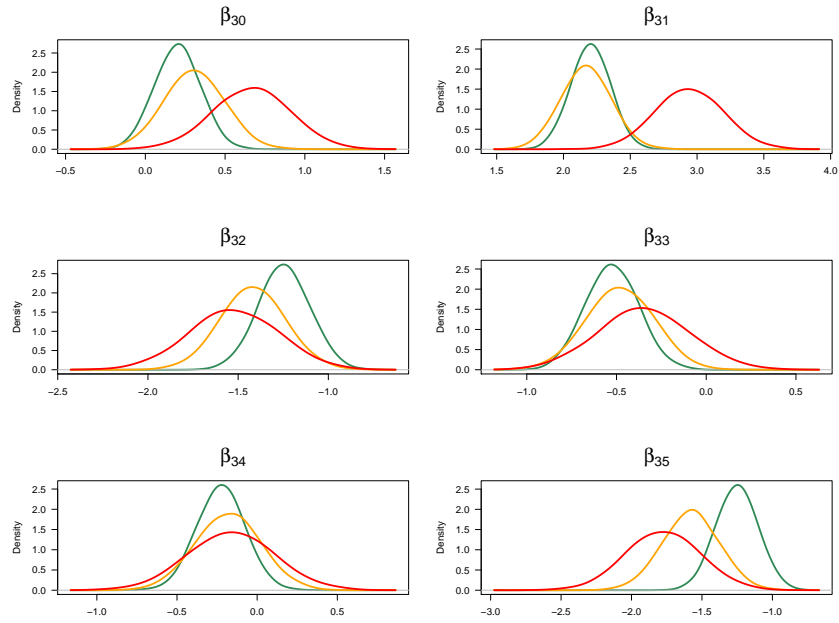}
\caption{Posterior densities of some non-zero $\v \beta$ coefficients for the artificial data. The green density is from the complete data model, the orange density is from Bayesian imputation, and the red density is from the na\"{\i}ve imputation.}
\label{fig:gauss_40_beta}
\end{figure}

\begin{figure}[!h]
\includegraphics[width=0.7\textwidth]{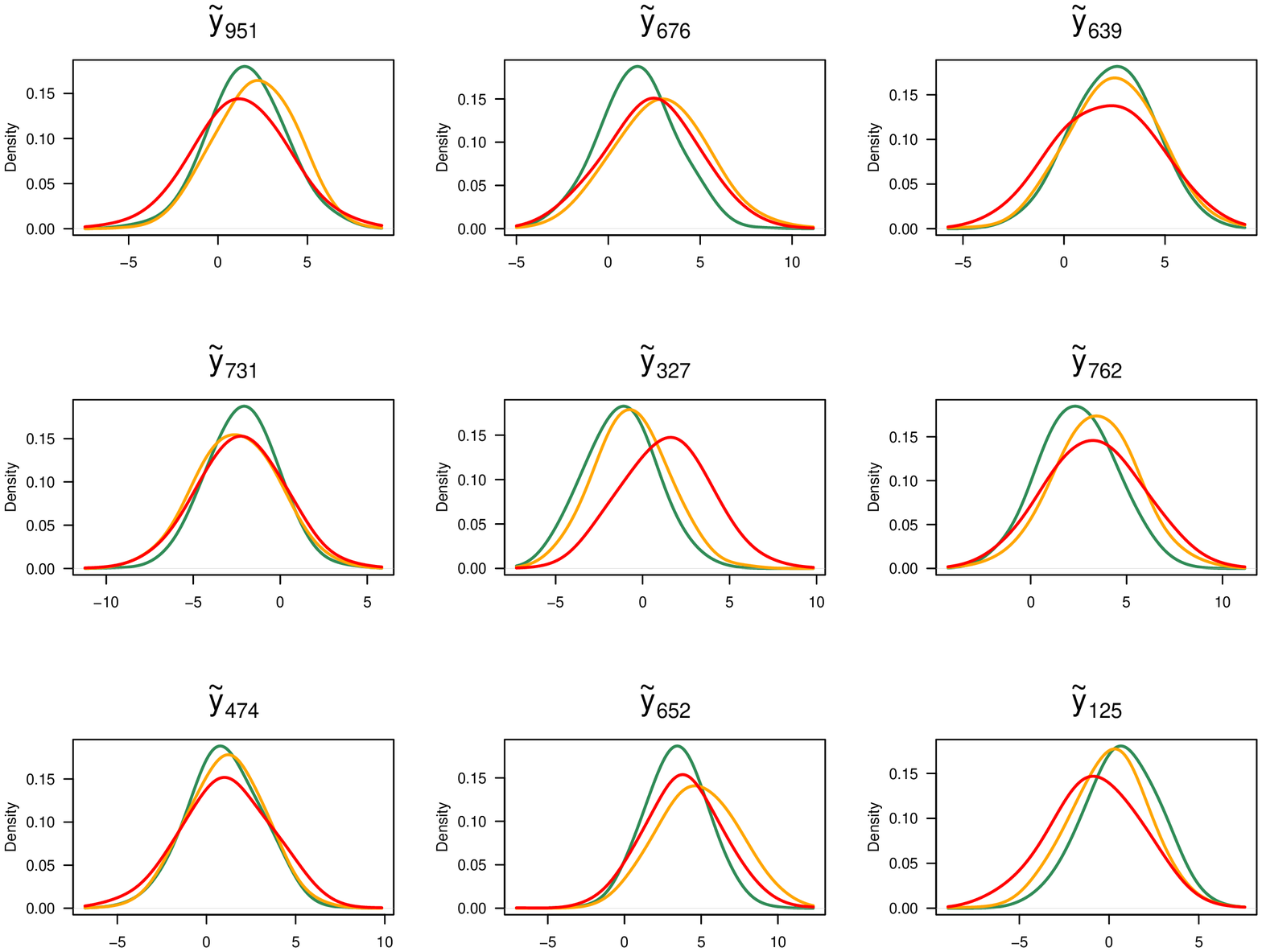}
\caption{Predictive distribution densities for the artificial data.}
\label{fig:gauss_40_preds}
\end{figure}

Figure \ref{fig:gauss_40_beta} illustrates the posterior densities of $\v \beta$ for a subset of the significant covariates for one of the generated datasets. Our Bayesian imputation method produces densities closer to the complete data model compared to the na\"{\i}ve imputation. Furthermore, the prior variance is lower and closer to the complete data model for all $\v \beta$ coefficients. Similarly, Figure \ref{fig:gauss_40_preds} shows that our Bayesian imputation gives predictive distributions that are in general closer to the complete data model than the na\"{\i}ve model, which have less certainty and/or is shifted away from the complete data model densities.

To more formally assess the predictive performance over all datasets and observations, we use the log predictive score measure on the test data \citep{gelman1995bayesian} 
\begin{equation*}
    \sum_{i=1}^{\tilde n} \log p(\tilde y_i \vert \xobsi, \v y, \v X^{(\mathrm{o})}).
\end{equation*}
Table \ref{tab:sum_scores_gauss} shows the sum of the log predictive scores over each of the five sets of test data. The log predictive scores are consistently higher for the Bayesian imputation model compared to the na\"{\i}ve model. Figure \ref{fig:gauss_40_scores} present an alternative view of the prediction performance by kernel density plots of the log predictive density evaluations $\log p(y_i \vert \xobsi)$ over all $\tilde n = 1000$ test observations in Dataset 1 in Table \ref{tab:sum_scores_gauss}; the four other datasets gave similar results.

\begin{table}[h!]
    \centering
\begin{tabular}{lrrrrr|r}
    Imputation               & \text{Dataset} 1 & \text{Dataset} 2 & \text{Dataset} 3 &  \text{Dataset} 4 & \text{Dataset} 5 & Average \\ \toprule
    Complete            & -2179 & -2156 & -2156 & -2130 & -2131 & -2150 \\
    Bayesian &  -2316 & -2249 & -2299 & -2256 & -2307 &  -2285 \\
    Na\"{\i}ve    &  -2388 & -2369 & -2449 & -2323 & -2369 & -2380 \\
\bottomrule
\end{tabular}
    \caption{Sum of log predictive test scores for the artificial data.}
    \label{tab:sum_scores_gauss}
\end{table}

\begin{figure}[!h]
\includegraphics[width=0.7\textwidth]{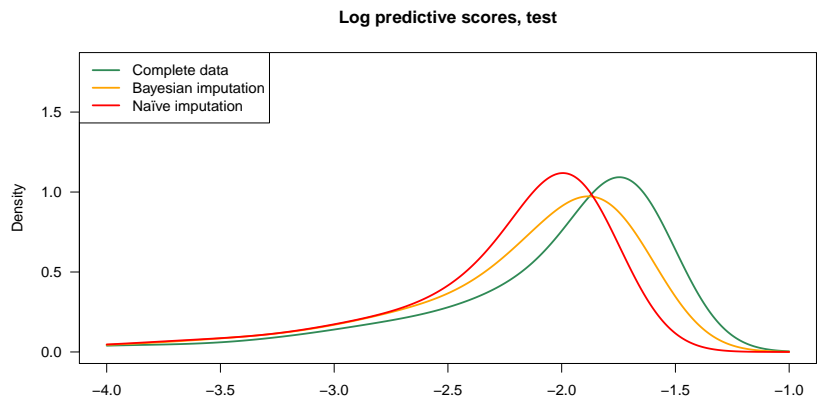}
\caption{Kernel density estimates of the log predictive density values for the $\tilde{n} = 1000$ test cases in Dataset 1.}
\label{fig:gauss_40_scores}
\end{figure}

\subsection{Auxiliary variables for improved prediction performance}

In this subsection, the multivariate regression model for $\v X$ in \eqref{eq:model} is included and the $q = 10$ auxiliary variables in $\v w$ explain between $65$ and $90$ percent of the variation in the $p = 5$ covariates in $\v x$. The covariates in the predictive model for $\v y$ are uncorrelated in this scenario conditional on $\v w$.

Table \ref{tab:sum_scores_aux} shows that the model with auxiliary variables dramatically outperforms the model without auxiliary variables for all five simulated data sets. The log predictive density values for the $\tilde n = 1000$ test cases in Dataset 1 are displayed as kernel density plots in Figure \ref{fig:scores_aux}. Having access to informative auxiliary variables are clearly useful for prediction. This is important as industry agents can often easily collect variables that may be useful as auxiliary variables, for example aspects of the connecting devices in the telecom  application.

\begin{table}[h!]
    \centering
\begin{tabular}{lrrrrr|r}
    Imputation                                       & Dataset 1 & Dataset 2 & Dataset 3 & Dataset 4 & Dataset 5 & Average \\ \toprule
    Complete                                    & -2109 & -2100 & -2094 & -2121 & -2104 & -2106 \\
    
    Bayesian, with $\v W$        & -2221 & -2193 & -2151 & -2196 & -2126 & -2178 \\
    Bayesian, no $\v W$     & -2406 & -2439 & -2371 & -2370 & -2275 & -2372 \\\bottomrule
\end{tabular}
    \caption{Log predictive score comparison for the model with and without auxiliary variables.}
    \label{tab:sum_scores_aux}
\end{table}

\begin{figure}[!ht]
\includegraphics[width=0.7\textwidth]{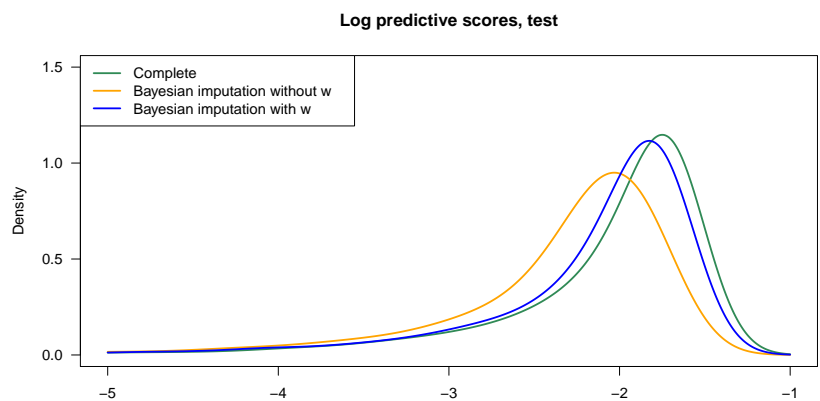}
\caption{Kernel density estimates of the log predictive density values for the $\tilde{n} = 5000$ test cases in Dataset 1-5, with and without auxiliary variables.}
\label{fig:scores_aux}
\end{figure}

\section{Application to signal strength censoring}\label{sec:application}

High dimensional censored data are standard in wireless communications networks. There are usually several radio carrier frequencies available for connection at each location, with each user typically being connected to one of them. To optimize the user connection, the network should ideally make autonomous decisions regarding carrier frequency. When deciding whether or not to switch frequencies for a user, the user equipment disconnects briefly to perform a signal strength measurement on alternative frequencies. To avoid the need for disconnection, it is of great interest to predict the signal strengths on alternative frequencies instead of disconnecting to measure them.

We use data from a sophisticated simulator of signal strengths in a wireless network from one of the world's largest telecommunication companies, Ericsson AB. The covariates are signal strengths from the carrier frequency \textit{cells}, which are smaller geographical areas within the wireless network from which a connection can be made. An example is depicted in Figure \ref{fig:network}.  At each location, the user is assigned to the cell which offers the most reliable connection on that carrier frequency. The response variable in the regression is the maximum signal strength available at an alternative frequency. Cells which carry the same signal frequency may interfere with each other, causing signals to be censored if not strong enough compared to the strongest signal \citep{3gpp}. For these data, $p = 37$, $n = 800$ and $\tilde{n} = 200$.

\begin{figure}[!h]
\includegraphics[width=0.4\textwidth]{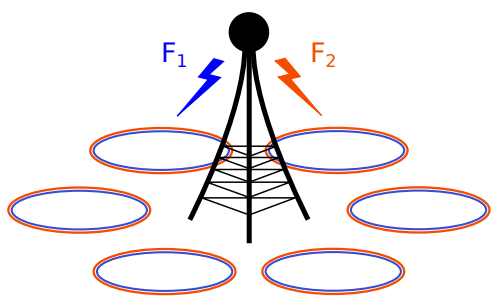}
\caption{A simple example of a base station transmitting two frequencies to surrounding cells.}
\label{fig:network}
\end{figure}

\begin{figure}[!h]
\includegraphics[width=1\textwidth]{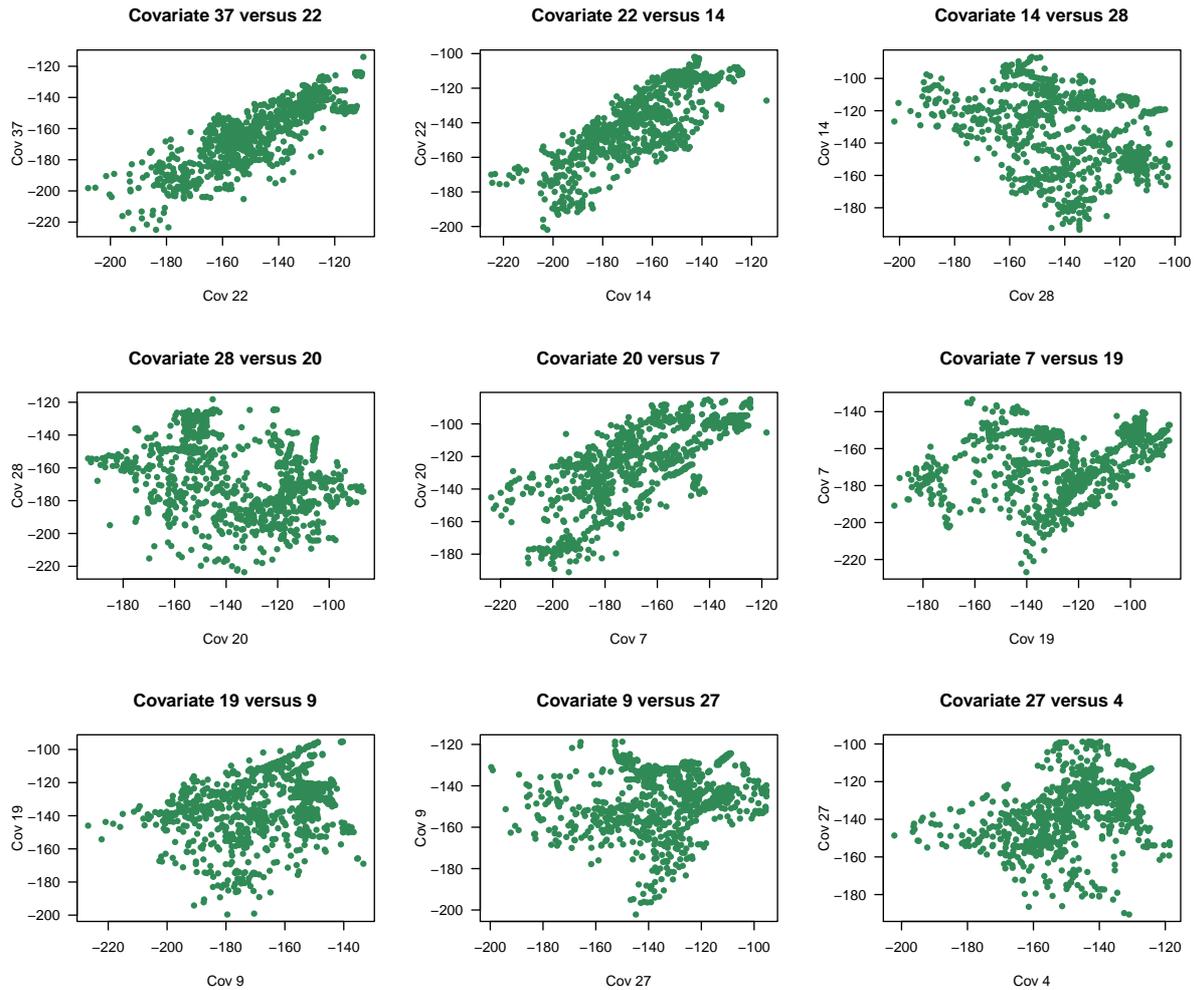}
\caption{Scatter plots for randomly selected pairs of covariates in the original wireless network simulator data.}
\label{fig:realCovDensities}
\end{figure}

The original simulator data have a spatial component that is not available in the dataset used here. This lacking spatial information causes the covariates to be highly non-Gaussian, as is illustrated in Figure \ref{fig:realCovDensities}. We therefore instead generate new artificial data that mimic the dataset from Ericsson's simulator. This is achieved by first estimating the model parameters from the complete data from the wireless network simulator. The posterior mean of the model parameters are then used for simulating five different data sets, each with $n = 1000$ training observations and $\tilde n = 1000$ test observations. This way, these datasets reflect Ericsson's situation, but do not violate the Gaussian assumption of the model. We will nevertheless refer to these simulated data as the \emph{wireless network data}, to distinguish them from the previously presented artificial datasets. We set the detection limit $\Delta$ so that we get approximately $25\%$ censored data on average over the observations.

Figures \ref{fig:sim50Beta} and \ref{fig:sim50Y} display a subset of $\v \beta$ posterior densities and predictive densities for some randomly selected test cases. The posteriors for $\v \beta$ under the Bayesian imputation are generally closer to the densities from complete data when compared to the densities from a na\"{\i}ve imputation. For most of the predictive distributions the Bayesian imputation achieves a density similar to the complete densities, while the na\"{i}ve imputation densities are shifted away from the complete data density for the majority of the test cases. Table \ref{tab:sum_scores_sim_real} and Figure \ref{fig:sim50Scores} show that the Bayesian imputation generally attains substantially higher log predictive scores than the na\"{i}ve strategy. For all the simulated data sets, the Bayesian imputation model achieves a higher sum of log predictive scores than the na\"{i}ve imputation model.

\begin{figure}[!h]
\includegraphics[width=0.7\textwidth]{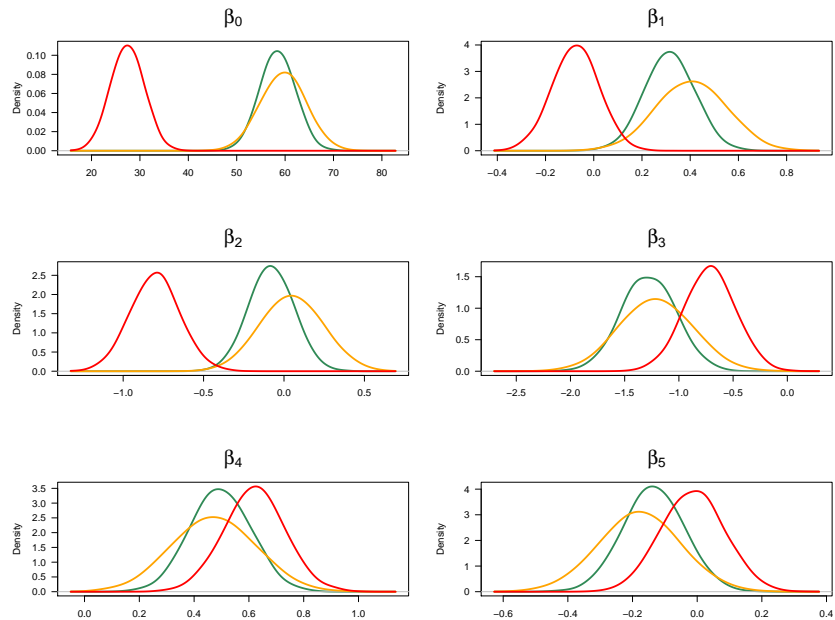}
\caption{Posterior densities for the first six regression coefficients in Dataset 1 for the wireless network data.}
\label{fig:sim50Beta}
\end{figure}

\begin{figure}[!h]
\includegraphics[width=0.7\textwidth]{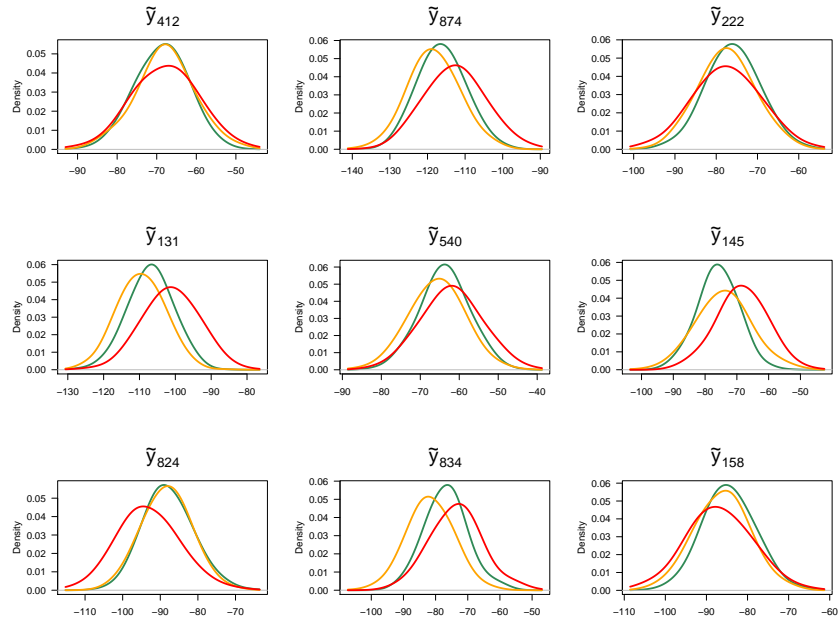}
\caption{Predictive densities for some randomly selected test cases in Dataset 1 for the wireless network data.}
\label{fig:sim50Y}
\end{figure}

\begin{table}[h!]
    \centering
\begin{tabular}{lrrrrr|r}\toprule
    Imputation               & Dataset 1 & Dataset 2 & Dataset 3 & Dataset 4 & Dataset 5 & Average \\ \toprule
    Complete            & -3277 & -3285 & -3265 & -3312 & -3228 & -3273 \\
    
    Bayesian & -3374 & -3407 & -3364 & -3408 & -3337 & -3378 \\
    Na\"{\i}ve    & -3486 & -3503 & -3461 & -3511 & -3470 & -3486 \\\bottomrule
\end{tabular}
    \caption{Log predictive scores for the wireless network data.}
    \label{tab:sum_scores_sim_real}
\end{table}

\begin{figure}[!h]
\includegraphics[width=0.7\textwidth]{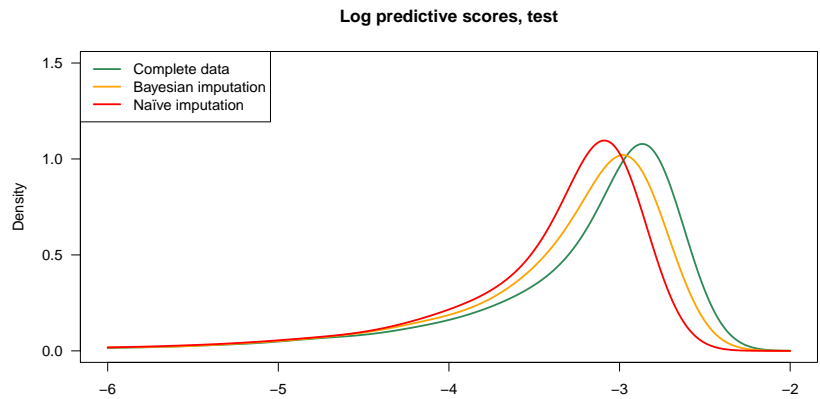}
\caption{Kernel density estimates of the log predictive density values for the $\tilde{n} = 5000$ test cases in Dataset 1-5 from the wireless network data.}
\label{fig:sim50Scores}
\end{figure}

\section{Conclusions}\label{sec:discussion}

We have presented an efficient Gibbs sampling algorithm for regression or classification with missing covariate observations. The algorithm samples the missing values jointly, which can be at least two orders of magnitude as efficient as univariate sampling of missing values. The conditions for this improved efficiency is clarified by deriving the posterior correlation of the missing values. A scheme for simulating from the predictive distribution is proposed and the extra complication from having to re-run the Gibbs sampler on the training data in the prediction phase is highlighted. The predictive performance of the model is documented on artificial data and on data from the telecom sector.

The efficiency of the proposed joint updates of the missing values depends on the recently developed simulation algorithm for truncated multivariate normal distributions in \citet{botev2017normal}. Botev's algorithm uses a highly efficient rejection sampling based on a very accurate minimax tilting method to approximate the posterior density, which makes it reliable and efficient also in high dimensions. While we have found it to work flawlessly for the simulation setups in this paper, it could of course break down in extreme dimensions. For such cases we instead proposed to sample the missing values in a smaller number of blocks, using Lemma \ref{lem:jointsampling} to exploit the correlation structure when arranging the missing values in blocks in the most efficient way.

Efficiency for a given time budget can be further enhanced by using Random scan updates of the missing covariates where only a random subset of observations are updated in each Gibbs iteration. Random scan is shown to be particularly useful when the aim is prediction.

We use artificially simulated data to show that our Bayesian imputation method is consistently superior to the na\"{\i}ve imputation model when evaluating predictive density performance in terms of log predictive scores. 

Two strategies for making predictions for test cases are proposed: an exact yet computationally costly way and a more affordable approximation that takes the shortcut of not updating the model parameters conditioned on new test cases. The approximation is demonstrated to be give predictive distributions that are quite close to the ones from the exact strategy, at least when the test sets are not too large in relation to the size of the training data. This can be ensured by retraining the model regularly after an appropriate amount of test updates.

Our model allow the use of auxiliary variables in a multivariate regression model for the missing covariates. We show that the inference for the missing values can efficiently exploit the auxiliary variables and dramatically outperform the model without auxiliary variables in predictive performance.

We assume that the detection limit, as determined by $\Delta$, is known. However, this is not always the case for interference problems, and future work should put efforts to generalize the framework to unknown detection limits. This can be straightforward achieved in principle by adding a Metropolis-Hastings updating step for $\Delta$ to the Gibbs sampler, but issues of parameter identification should be explored in detail.

The method relies on the covariates being Gaussian, and it would be interesting to extend the modeling structure to more complex data. One straightforward approach is to assume a multivariate Gaussian mixture model for the covariates, a model which is well known to be amenable to Gibbs sampling by simply augmenting the model latent mixture allocation indicators for each observation \citep{bishop2006pattern}. 

Finally, the presented framework is based on linear regression or classification models like probit or logistic regression with linear decision boundaries. The extension to non-linear models is not straightforward, not even for models like polynomials which remain linear in the regression coefficients, since the full conditional posteriors for the missing values become intractable. Future developments should therefore develop efficient proposal distributions for Metropolis-Hastings updates of the missing values.

\bibliographystyle{apalike}
\bibliography{DetectionLimit}

\appendix

\section{Proofs}\label{Appendix:Proofs}

\subsection{Proof of Lemma \ref{lem:jointsampling}}

\begin{proof}
Using a version of the Sherman-Morrison formula
$$(\boldsymbol{aa^{\top}}+\boldsymbol{A})^{-1}=\boldsymbol{A}^{-1}-\boldsymbol{A}^{-1}\boldsymbol{a}\boldsymbol{a}^{\top}\boldsymbol{A}^{-1}(1+\boldsymbol{a^{\top}}\boldsymbol{A}^{-1}\boldsymbol{a})^{-1}$$
for a $p$-dimensional vector $\boldsymbol{a}$ and invertible $p\times p$
matrix $\boldsymbol{A}$ \citep{harville1998matrix}[Corollary 18.2.10] the full
conditional posterior covariance of the missing values $\xmis$ given in Section \ref{sec:inference} can
be written
\end{proof}
\begin{align}
\mathrm{Cov}(\xmis|\xobs,y) & =\Bigg(\frac{1}{\sigma^2}\v \beta_\mathrm{m} \v \beta_\mathrm{m}^\top + \bar{\v \Sigma}^{-1}\Bigg)^{-1}\nonumber \\
 & =\bar{\v \Sigma}-\bar{\v \Sigma}\frac{1}{\sigma_{y}}\boldsymbol{\beta}_\mathrm{m}\frac{1}{\sigma_{y}}\boldsymbol{\beta}_\mathrm{m}^{\top}\bar{\v \Sigma}\left(1+\frac{1}{\sigma_{y}^{2}}\boldsymbol{\beta}_\mathrm{m}^{\top}\bar{\v \Sigma}\boldsymbol{\beta}_\mathrm{m}\right)^{-1}\nonumber \\
 & =\bar{\v \Sigma}-\frac{\bar{\v \Sigma}\boldsymbol{\beta}_\mathrm{m}\boldsymbol{\beta}_\mathrm{m}^{\top}\bar{\v \Sigma}}{\sigma_{y}^{2}+\boldsymbol{\beta}_\mathrm{m}^{\top}\bar{\v \Sigma}\boldsymbol{\beta}_\mathrm{m}}.\label{eq:CovPostInitialExpression}
\end{align}
Note that \-
\[
\mathrm{Cov}(\xmis,y|\xobs)=\mathrm{Cov}(\xmis,\v \beta_\mathrm{m}^{\top}\xmis+\varepsilon)=\bar{\v \Sigma}\v \beta_\mathrm{m}
\]
and 
\[
\mathrm{Var}(y|\xobs)=\mathrm{Var}(\v \beta_\mathrm{m}^{\top}\xmis+\varepsilon|\xobs)=\mathrm{Var}(\v \beta_\mathrm{m}^{\top}\xmis+\varepsilon|\xobs)=\v \beta_\mathrm{m}^{\top}\bar{\v \Sigma}\boldsymbol{\beta}_\mathrm{m}+\sigma_{y}^{2},
\]
and by definition $\bar{\v \Sigma}=\mathrm{Var}(\xmis\vert\xobs)$.
We can therefore write \eqref{eq:CovPostInitialExpression} as

\begin{align}
\mathrm{Cov}(\xmis|\xobs,y) & =\mathrm{Cov}(\xmis\vert\xobs)-\frac{\mathrm{Cov}(\xmis,y|\xobs)\mathrm{Cov}(\xmis,y|\xobs)^{\top}}{\mathrm{Var}(y|\xobs)}\nonumber \\
 & =\mathrm{Cov}(\xmis\vert\xobs)-\frac{\v S \v S^{-1}\mathrm{Cov}(\xmis,y|\xobs)\mathrm{Cov}(\xmis,y|\xobs)^{\top}\v S^{-1}\v S}{\sqrt{\mathrm{Var}(y|\xobs)}\sqrt{\mathrm{Var}(y|\xobs)}}\label{eq:CovPost}\\
 & =\mathrm{Cov}(\xmis\vert\xobs)-\v S\rho(\xmis,y|\xobs)\rho(\xmis,y|\xobs){}^{\top}\v S,\nonumber 
\end{align}
where $\v S=\mathrm{Diag}\left(\sqrt{\mathrm{Var}(\xmis_{k}\vert\xobs)}\right)$.
The $k$th diagonal element of $\mathrm{Cov}(\xmis|\xobs,y)$
can then be read of \eqref{eq:CovPost} as 
\[
\mathrm{Var}(\xmis_{k}|\xobs,y)=\mathrm{Var}(\xmis_{k}\vert\xobs)\left(1-\rho^{2}(\xmis_{k},y|\xobs)\right).
\]
The expression for the posterior correlation matrix $\rho(\xmis|\xobs,y)$
in the lemma is finally obtained by simplifying 
\[
\rho(\xmis|\xobs,y)=\tilde{\boldsymbol{S}}^{-1}\mathrm{Cov}(\xmis|\xobs,y)\tilde{\v S}^{-1},
\]
where $\mathrm{\tilde{\v S}=Diag}\left(\sqrt{\mathrm{Var}(\xmis_{k}|\xobs,y)}\right)$.

\subsection{Derivation of the full conditional posteriors}

\subsubsection*{Full conditional for $\v\beta$ and $\sigma^2$}

Since the full conditional posterior of $\v\beta$ and $\sigma^2$ conditions on the complete data, these posteriors follow from standard results on Bayesian linear regression, see e.g. \citet{bishop2006pattern}.

\subsubsection*{Full conditional for $\v \Gamma$ and $\v \Omega$}

Conditional on all other model parameters, the likelihood part in the joint posterior of $\v \Gamma$ and $\v \Omega$ is given by the likelihood of a multivariate regression model
\begin{equation*}
    \v X = \v W \v \Gamma + \v E, 
\end{equation*}
where the rows of $\v E$ are iid from $N(\v 0,\v \Omega)$. The likelihood for this model is
\begin{align*}
    p(\v X| \v W, \v \Gamma, \v \Omega) = & \prod_{i=1}^n{|2\pi\v \Omega|^{-1/2}\text{ exp}
    \Bigg(-\frac{1}{2}\big(\v x_i-\v \Gamma^{\top} \v w_i\big)^{\top} \v \Omega^{-1}\big(\v x_i-\v \Gamma^{\top}\v w_i\big)\Bigg)} \\
    = & |2\pi\v \Omega|^{-n/2}\text{ exp}
    \Bigg(-\frac{1}{2}\text{tr}\v \Omega^{-1}\big(\v X-\v W \v \Gamma)^{\top} \big(\v X-\v W \v \Gamma)\Bigg)\\
    = & |2\pi\v \Omega|^{-n/2}\text{ exp}
    \Bigg(-\frac{1}{2}\text{tr}\v \Omega^{-1}\big(n \v S + \big(\v \Gamma - \hat{\v \Gamma}\big)^\top \v W^\top \v W \big(\v \Gamma - \hat{\v \Gamma}\big)\Bigg),
\end{align*}
where 
\begin{align*}
    \hat{\v \Gamma} = & \big( \v W^\top \v W \big) ^{-1} \v W^\top \v X,\\
    \v S = & \big(\v X - \v W \hat{\v \Gamma} \big)^\top \big(\v X - \v W \hat{\v \Gamma} \big)/n.
\end{align*}
Using the matrix identity \citep{harville1998matrix}
\begin{equation*}
    \text{tr }\big(\v A_1^\top \v A_2 \v A_3 \v A_4^\top \big) = \big(\operatorname{vec} \v A_1\big)^\top\big(\v A_4 \otimes \v A_2 \big) \operatorname{vec} \v A_3
\end{equation*}
with
\begin{equation*}
    \v A_1 =  \v \Gamma - \hat{\v \Gamma},
    \v A_2 =  \v W^\top \v W,
    \v A_3 =  \v \Gamma - \hat{\v \Gamma}, \text{ and }
    \v A_4 =  \v \Omega^{-1},
\end{equation*}
we get the likelihood
\begin{equation*}
    p(\v X| \v W,\v \Gamma, \v \Omega) = |2 \pi \v \Omega|^{-n/2} \text{exp}\Bigg(- \frac{1}{2} \text{tr }\v \Omega^{-1} n \v S \Bigg) \text{exp}\Bigg(\frac{1}{2} \operatorname{tr} \big( \v \gamma - \hat{\v \gamma} \big)^\top \big(\v \Omega^{-1} \otimes \v W^\top \v W \big) \big( \v \gamma - \hat{\v \gamma} \big) \Bigg).
\end{equation*}
Multiplying this likelihood with the prior
\begin{align*}
    \gamma|\v \Omega & \sim \mathcal{N}(0, \v \Omega \otimes \tau^2_{\v \gamma}\v I_r) \\
    \v \Omega & \sim IW(\v A, \kappa),
\end{align*}
and completing the squares in the exponents results in the joint posterior density in Section \ref{sec:inference}.

\subsubsection*{Full conditional for $\xmis_{i}$}

The full conditional posterior for $\xmis_{i}$ is
\begin{align}\label{eq:postmissing}
    p(\xmisi | y_i, \xobsi, \v W, \v \beta, \sigma, \v \Gamma, \v \Omega) & \propto p(y_i | \xmisi, \xobsi, \v \beta, \sigma)\\
    &  \times p(\xmisi | \xobsi, \v W, \v \Gamma, \v \Omega) \cdot I(\xmisi \leq \v c_i) \nonumber, 
\end{align}
where $I(\xmisi \leq \v c_i)=1$ if all elements in $\xmisi$ are smaller than their corresponding detection limits in the vector $\v c_i$, otherwise $I(\xmisi \leq \v c_i)=0$.

\noindent Partition the joint distribution of the missing and observed covariates as
\begin{equation*}
    \begin{pmatrix}
                \xmisi  \\
                \xobsi  
    \end{pmatrix}
    \sim \mathcal{N}\Bigg[   
    \begin{pmatrix}
                \xmishat_i  \\
                \xobshat_i  
    \end{pmatrix},
    \begin{pmatrix}
        \v \Omega_{\mathrm{m}\mathrm{m}} & \v \Omega_{\mathrm{m}\mathrm{o}}\\
        \v \Omega_{\mathrm{o}\mathrm{m}} & \v \Omega_{\mathrm{o}\mathrm{o}}
    \end{pmatrix}
\Bigg],
\end{equation*}
where $\hat{\v x}_i = (\xmishat_i{}^{\top},\xobshat_i{}^{\top})^\top = \v \Gamma^\top \v w_i$. The factor $p(\xmisi | \xobsi, \v W, \v \Gamma, \v \Omega)$ can now be explicitly expressed using the conditioning properties of the multivariate normal distribution \citep{harville1998matrix} as 
\begin{equation}\label{eq:conditioningresult}
      \xmisi | \xobsi, \v W, \v \Gamma, \v \Omega \sim \mathcal{N}(\bar{\v \mu}_i, \bar{\v \Sigma}_i), 
\end{equation}
where
\begin{align*}
\bar{\v \mu}_i &= \xmishat_i +\v \Omega_{\mathrm{m}\mathrm{o}}\v \Omega_{\mathrm{o}\mathrm{o}}^{-1}(\xobsi-\xobshat_i) \\
\bar{\v \Sigma}_i &= \v \Omega_{\mathrm{m}\mathrm{m}} - \v \Omega_{\mathrm{m}\mathrm{o}}\v \Omega_{\mathrm{o}\mathrm{o}}^{-1}\v \Omega_{\mathrm{o}\mathrm{m}}.
\end{align*}
Now, using \ref{eq:conditioningresult} in \ref{eq:postmissing} we get
\begin{align*}
        p(\xmisi &| y_i, \xobsi, \v W, \v \beta, \sigma, \v \Gamma, \v \Omega)  \propto p(y_i | \xmisi, \xobsi, \v \beta, \sigma) \cdot p(\xmisi | \xobsi, \v W, \v \Gamma, \v \Omega) \cdot I(\xmisi \leq \v c_i)\\ \nonumber
        & \propto  \exp \Bigg\{ \sigma^{-2} \Big(\tilde{y}_i- \v \beta_\mathrm{m}^\top \xmisi \Big)^2 
            + \Big(\xmisi-\bar{\v \mu}_i\Big)^\top \v \Sigma_i^{-1}  \Big(\xmisi-\bar{\v \mu}_i\Big)\Bigg\}\cdot I(\xmisi \leq \v c_i),
\end{align*}
where $\tilde{y}_i = y_i - \beta_0 - \v \beta_\mathrm{o}^\top \xobsi$. Completing the square in the exponent gives
\begin{equation*}
    p(\xmisi | y_i, \xobsi, \v W, \v \beta, \sigma, \v \Gamma, \v \Omega)  \propto 
    \exp \Bigg\{ -\frac{1}{2}\big(\xmisi-\boldsymbol{\mu}_{\xmisi}\big)^\top \boldsymbol{\Omega}_{\xmisi}^{-1} \big(\xmisi-\boldsymbol{\mu}_{\xmisi}\big) \Bigg\}
    \cdot I(\xmisi \leq \v c_i)
\end{equation*}
where
\begin{align*}
     \boldsymbol{\mu}_{\xmisi} & = \Bigg(\frac{1}{\sigma^2}\v \beta_\mathrm{m} \v \beta_\mathrm{m}^\top + \bar{\v \Sigma}_i^{-1}\Bigg)^{-1} \Bigg(\bar{\v \Sigma}_i^{-1} \bar{\v \mu}_i+\frac{\tilde{y}_i}{\sigma^2} \v \beta_\mathrm{m}\Bigg)\\
    \boldsymbol{\Omega}_{\xmisi} & = \Bigg(\frac{1}{\sigma^2}\v \beta_\mathrm{m} \v \beta_\mathrm{m}^\top + \bar{\v \Sigma}_i^{-1}\Bigg)^{-1}.
\end{align*}
The full conditional posterior is therefore the truncated multivariate normal density given in Section \ref{sec:inference}.

\subsection{Data generation and prior hyperparameters in the experiments}\label{Appendix:data}

Section \ref{sec:simulations} use data simulated from the following process. 

The elements of $\v \beta$ are simulated independently from a standard normal distribution, and we set the error variance $\sigma^2 = 4$. When auxiliary variables $\v w$ are included in the model, $\v w \sim \mathcal{N}(\mathbf{0}, \v I_q)$ and each element of $\v \Gamma$ is simulated from a standard normal distribution. The covariance matrix $\v \Omega$ has a block-equicorrelation structure in the following way. The covariates are divided into two uncorrelated groups with unit standard deviations, correlation $\rho_1$ between all covariates in the first group and $\rho_2$ between covariates in the second group. Exact numbers of the correlations are given in the paper as they differ across experiments. This simulation setup gives regressions with $R^2$ in the range $0.65< R^2 < 0.8$. For the simulations without auxiliary variables, the covariates are drawn from the $\mathcal{N}(\v 0,\v \Omega)$ distribution, with the same $\v \Omega$ as described above.

The priors in the experiments are generally non-informative with $\tau_{\beta_0} = \tau_{\tilde \beta} = 10^4$ in the prior for $\beta_0$ and $\v \beta$, respectively. The prior hyperparameters of $\sigma^2$ are $a = m^2/v+2$, $b = m(m^2/v+1)$, where $v = 2000^2$ and $m$ is the variance of $\v y$, which also varies across experiments. For $\v \Omega$, we use a prior with $\kappa = 10$ degrees of freedom and set $A = \frac{1}{10}  \v I_p$. Finally, $\tau_\gamma = \sqrt{0.1}$ in the prior of $\v \gamma$.  
\end{document}